\newtheorem{Proposition}{Proposition}
\begin{document}
	\title{Performance Analysis of SPAD-Based Optical Wireless Communication with OFDM}
	\author{Shenjie Huang, Yichen Li, Cheng Chen, Mohammad Dehghani Soltani, Robert Henderson, Majid Safari, and Harald Haas 
		\thanks{S. Huang and Y. Li contribute equally to this work. S. Huang, M. D. Soltani, R. Henderson and M. Safari are with the School of Engineering, the University of Edinburgh, Edinburgh EH9 3JL, UK. C. Chen and H. Haas are with LiFi Research and Development Centre, University of Strathclyde, Glasgow G1 1RD, UK. (\textit{Corresponding author: S. Huang.})}
	}
	\maketitle
	\begin{abstract}
	 In recent years, there has been a growing interest in the use of single-photon avalanche diode (SPAD) in optical wireless communication (OWC). SPAD operates in the Geiger mode and can act as a photon counting receiver obviating the need for a transimpedance amplifier (TIA). Although a SPAD receiver can provide higher sensitivity compared to the traditional linear photodetectors, it suffers from the dead-time-induced nonlinearity. To improve the data rates of SPAD-based OWC systems, optical orthogonal frequency division multiplexing (OFDM) can be employed.  This paper provides a comprehensive theoretical analysis of the SPAD-based OWC systems using OFDM signalling considering the effects of signal clipping, SPAD nonlinearity, and signal-dependent shot noise. An equivalent additive Gaussian noise channel model is proposed to describe the performance of the SPAD-based OFDM system. The statistics of the proposed channel model and the analytical expressions of the signal-to-noise ratio (SNR) and bit error rate (BER) are derived in closed forms.  By means of extensive numerical results, the impact of the unique receiver nonlinearity on the system performance is investigated. The results demonstrate new insights into different optical power regimes of reliable operation for SPAD-based OFDM systems even well beyond SPAD saturation level.      
	\end{abstract}
	\begin{IEEEkeywords}
		Nonlinear distortion, optical wireless communication, orthogonal frequency division multiplexing, single-photon avalanche diode. 
	\end{IEEEkeywords}

\section{Introduction}
Due to the scarcity of the radio frequency (RF) spectrum, optical wireless communication (OWC) has attracted significant interest in both industry and scientific communities in recent years. Compared to traditional RF wireless communication, the potential advantages of OWC mainly include high data rates, excellent security levels and license-free spectrum \cite{Khalighi14}. However, the performance of OWC can be strongly degraded by the occasional outages caused by the received optical power fluctuations. On the one hand, for free-space optical (FSO) communication, the power fluctuation is mainly introduced by atmospheric turbulence, misalignment and adverse weather conditions, e.g., fog and haze \cite{HuangHy}. On the other hand, for visible light communication (VLC) the power fluctuation mainly results from user mobility and random orientation \cite{Soltani19}. One effective way of mitigating the effect of outages induced by power fluctuation is employing a highly sensitive photon counting receivers such as a single-photon avalanche diode (SPAD). The photon counting capability of a SPAD is achieved by biasing the traditional linear photodiode above the breakdown voltage so that it operates in the Geiger mode. Compared to the commonly used photodetectors, e.g., PIN photodiode (PD) and avalanche photodiode (APD), a SPAD obviates the need for a transimpedance amplifier (TIA) and has the advantage of higher sensitivity which can greatly extend the operation regime of the received optical power close to the quantum limit \cite{Zimmermann19}. In the literature, many studies have considered the applications of SPAD receivers in VLC \cite{Chitnis14,Zhang:18}, FSO \cite{HuangHy}, and underwater wireless optical communication (UWOC) \cite{Khalighi20}. 

Although SPAD-based receivers can provide single photon sensitivity, after each avalanche triggered by a photon detection, the SPAD needs to be quenched when it becomes blind to any incident photon arrivals for a short period of time called \textit{dead time}. It is known that the performance of SPAD receivers is greatly influenced by the dead-time-induced nonlinearity \cite{Eisele11,HuangHy}. To improve the performance of SPAD-based receivers, a detection scheme which uses not only the photon count information but also the photon arrival information is proposed in \cite{Huang20}. In addition, a SPAD-based OWC system with joint pre-distortion
and noise normalization functionality is proposed to mitigate the SPAD nonlinear distortion \cite{Huang22}. Although most of the SPAD-based OWC works focus on simple on-off keying (OOK), optical orthogonal frequency division multiplexing (OFDM) has also been employed in SPAD-based systems to improve the spectral efficiency (SE) \cite{Zhang22,Huang:22,Li15}. In particular, in \cite{Huang:22}, an experimental data rate of $5$ Gbps using a commercial SPAD receiver is reported, which is significantly higher than the data rates achieved with OOK \cite{Ahmed20,Matthews21}. Despite a few experimental works, to the best of our knowledge, a theoretical performance analysis of SPAD-based OFDM OWC systems in the presence of nonlinear effects at both transmitter and receiver is still missing. Therefore, the objective of this paper is to fill this research gap.    

The performance analysis of OFDM-based OWC systems with traditional linear photodetectors has been widely investigated. In particular, in \cite{Dimitrov12}, the impact of clipping noise on the performance of OWC systems with OFDM is discussed. Note that the time-domain OFDM signal is approximately normal distributed with high peak-to-average power ratio (PAPR), but practical light sources have limited dynamic ranges. Therefore, signal clipping should be employed to fit the signal into the limited dynamic ranges.  In \cite{Tsonev13}, a set of polynomials is employed to further explore the effects of a general nonlinear distortion on OWC systems. Different from the systems with linear receivers, when SPAD is employed in OWC systems, the unique receiver nonlinearity induced by the SPAD dead time has to be also taken into account. In addition, being a photon counting receiver, the dominant noise factor of a SPAD receiver is the signal-dependent shot noise rather than the signal-independent thermal noise. As a result, the performance analysis of traditional OFDM with linear PD cannot be directly applied to OFDM with a SPAD receiver.   

This work provides a complete framework for the performance analysis of OFDM-based OWC systems with SPAD receivers considering practical nonlinear effects at both transmitter (i.e., clipping) and receiver (i.e., dead time induced nonlinearity) sides. An equivalent additive Gaussian noise channel model is defined to characterize the performance of the SPAD-based OFDM OWC systems. The statistics of the proposed channel is studied analytically and is given in closed form. The analytical expressions of the system signal-to-noise ratio (SNR) and bit error rate (BER) are derived. It is demonstrated that the SPAD nonlinearity has a significant impact on the communication performance of the OFDM system. The results provide new insights into the different regimes of reliable operation of SPAD receivers showing that high-speed transmission is achievable even at high power regimes beyond the saturation of SPAD receivers. Moreover, the trade-off between the received electrical signal power and nonlinear distortion indicates that the signal clipping at the transmitter can be optimized to improve the performance.         

The rest of this paper is organized as follows. Section \ref{OFDMsys}
introduces the SPAD-based OWC system with OFDM. Section \ref{theore}  presents the theoretical analysis of such system. The results and discussion are
presented in Section \ref{NumRes}. Finally, we conclude this paper
in Section \ref{con}.
 
\section{SPAD-Based OFDM System}\label{OFDMsys}
\begin{figure}[!t]
	\centering		\includegraphics[width=0.99\textwidth]{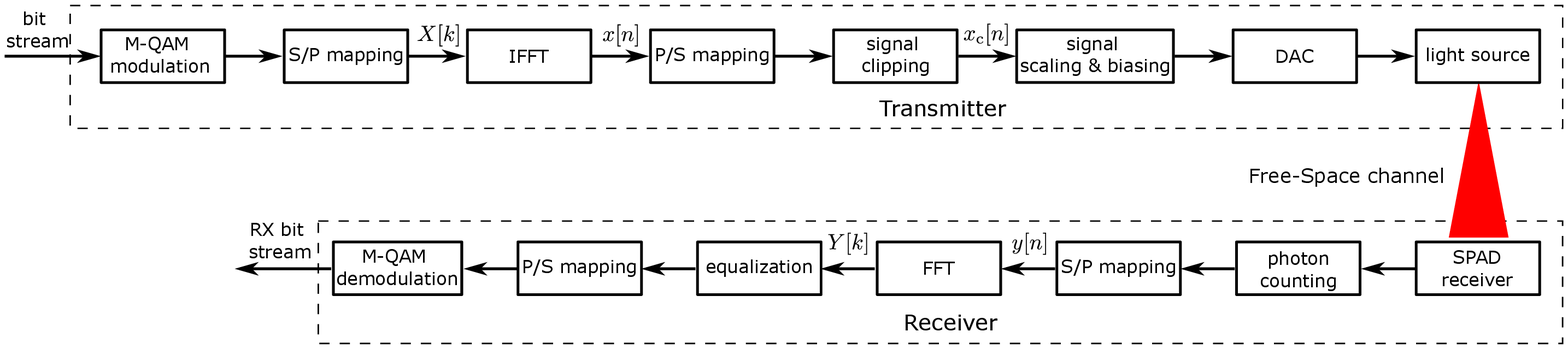}
	\caption{Block diagram of an OFDM OWC system with SPAD receiver.}\label{system_figure}
\end{figure}
\subsection{Optical OFDM Transmission}
Fig. \ref{system_figure} presents the block diagram of an OFDM OWC communication system using a SPAD-based optical receiver. In this work, we focus on the direct-current-biased optical OFDM (DCO-OFDM) due to its high SE and implementation simplicity, noting that similar analyses can be conducted for other OFDM schemes, e.g., asymmetrically clipped optical OFDM (ACO-OFDM). At the transmitter, the input bit stream is transformed into a complex symbol stream by the $M$-quadrature amplitude modulation (QAM) modulator, where $M$ denotes the constellation size. The symbol stream is then serial-to-parallel (S/P) converted to form vectors suitable for inverse fast Fourier transform (IFFT) operation. Consider a $K$-point fast Fourier transform (FFT) operation, the subcarriers carrying information bits populate the first half of the OFDM frame whereas leaving the $0$-th and $K/2$-th subcarriers unused \cite{Dimitrov13}. Therefore, the number of information carrying subcarriers is $K'=K/2-1$. Hermitian symmetry is applied to the rest of the OFDM frame in order to obtain real-valued symbols  after the  IFFT operation. Denote the symbol allocated to the $k$th information carrying subcarrier as $X[k]$ with $k = 1, \cdots,K'$.  Then, the time-domain signal $x[n]$ can be obtained after the IFFT as
\begin{equation}
	x[n]=\frac{1}{\sqrt{K}} \sum_{k=0}^{K-1} X[k] e^{\frac{2\pi nkj}{K}}. 
\end{equation}
%In order to avoid inter-frame and inter-symbol interference, a cyclic prefix is added to the beginning of each time-domain OFDM frame. 
According to the central limit theorem (CLT), the amplitude of $x[n]$ is approximately zero-mean Gaussian distributed as long as $K$ is relatively large \cite{Tsonev13}. Considering the uniform power alloction over the subcarriers, to ensure that $x[n]$ is with unit variance, the variance of $X[k]$ is set to $\sigma_X^2=K/(K-2)$ \cite{Dimitrov12}.

Considering that the PAPR of the generated signal $x[n]$ is relatively high whereas practical light sources are with limited dynamic ranges, $x[n]$ has to be properly clipped. The clipped signal can be expressed as 
\begin{equation}\label{xclipped}
	x_{\rm c}[n]=
	\begin{cases}
		\kappa_{\rm t}, & \mathrm{if}\qquad  x[n]\geq \kappa_{\rm t},\\
		x[n],     & \mathrm{if}\qquad  \kappa_{\rm b}<x[n]< \kappa_{\rm t},\\
		\kappa_{\rm b}, & \mathrm{if}\qquad  x[n]\leq \kappa_{\rm b}, 	
	\end{cases}
\end{equation}
where $\kappa_{\rm t}$ and $\kappa_{\rm b}$ denote the normalized clipping levels from the top and bottom, respectively. After scaling, adding a DC bias and digital-to-analog conversion, the resultant electrical signal is used to drive the light source. In effect, the optical power of the $n$th time-domain OFDM sample emitted from the source is given by \cite{Dimitrov12}
\begin{equation}\label{xt}
	x_{\rm t}[n]=\delta\, x_{\rm c}[n]+P_{\rm bias},
\end{equation}
where $\delta$ denotes the scaling factor and $P_{\rm bias}$ is the DC bias value. Since the light source can only send unipolar signals, $x_{\rm t}[n]$ should be non-negative by choosing proper DC bias. Denoting the minimal and maximal optical power of the light source as $P_{\rm min}$ and $P_{\rm max}$, respectively, the following equations should be satisfied: $\delta \kappa_{\rm b}+P_{\rm bias}=P_{\rm min}$ and $\delta \kappa_{\rm t}+P_{\rm bias}=P_{\rm max}$, which leads to 
\begin{equation}
	\delta=\frac{P_{\rm max}-P_{\rm min}}{\kappa_{\rm t}-\kappa_{\rm b}}, \,\, {\rm and} \,\, P_{\rm bias}=\frac{P_{\rm min}\kappa_{\rm t}-P_{\rm max}\kappa_{\rm b}}{\kappa_{\rm t}-\kappa_{\rm b}}.
\end{equation}
The average transmit optical power can be expressed as \cite{Dimitrov12}
\begin{equation}
	\overline{P}_{\rm Tx}=\delta\left[f(\kappa_b)-f(\kappa_t)+\kappa_t Q(\kappa_t)+\kappa_bQ(-\kappa_b)\right]+P_{\rm bias},
\end{equation}
where 
\begin{equation}\label{fx}
f(x)=\frac{1}{\sqrt{2\pi}}\exp\left(-\frac{x^2}{2}\right),
\end{equation}
is the  probability density function (PDF) of a standard Gaussian distribution and ${Q}(\cdot)$ denotes the Q-function. Note that if a symmetric clipping is employed, i.e., $\kappa_{\rm t}=-\kappa_{\rm b}=\kappa$, one has $\delta=\left(P_{\rm max}-P_{\rm min}\right)/2\kappa$, $P_{\rm bias}=\left(P_{\rm max}+P_{\rm min}\right)/2$, and $\overline{P}_{\rm Tx}=P_{\rm bias}$. 

\subsection{SPAD Receivers}
A SPAD is an APD which is biased beyond reverse breakdown voltage in the so-called `Geiger' region. In this mode of operation, a SPAD triggers billions of electron-hole pair generations for each detected photon. In other words, a SPAD emits a very large current by receiving a single photon and thus can essentially be modelled as a single photon counter. The photodetection process of an ideal photon counter can be modelled using Poisson statistics which describe the shot noise effects \cite{Safari12}. However, the performance of a practical SPAD-based receivers depends on non-ideal effects such as dead time, photon detection efficiency (PDE), dark count rate (DCR), afterpulsing and crosstalk. In particular, due to the existence of dead time, significant nonlinearity is introduced which causes the deviation of the detected photon count of SPAD receivers from the Poisson distribution \cite{Huang22}. In order to mitigate the dead time effect and hence improve the photon counting capability, an array of SPADs whose output is the superposition of the photon counts of the individual SPADs is commonly utilized \cite{Fisher13,Huang20}. 

Assuming that the channel loss is $\zeta$, the average received signal optical power is given by $\overline{P}_{\rm Rx}=\zeta\,\overline{P}_{\rm Tx}$. The received optical power when the $n$th OFDM sample is transmitted can be expressed as $P_{\rm Rx}[n]=\zeta x_{\rm t}[n]$. To the SPAD receiver, it corresponds to an incident photon rate of \cite{Khalighi20}
\begin{equation}\label{lama}
	\lambda_\mathrm{a}[n]=\left(\frac{\Upsilon_{\rm PDE}\zeta x_{\rm t}[n]}{E_{\rm ph}}+\vartheta_{\rm DCR}+\vartheta_{\rm B}\right)\left(1+\varphi_{\rm AP}+\varphi_{\rm CT}\right),
\end{equation}
where $\Upsilon_{\rm PDE}$ denotes the PDE of the SPAD, $E_{\rm ph}$ is the photon energy, $\vartheta_{\rm B}$ represents the background photon rate, and $\vartheta_{\rm DCR}$, $\varphi_{\rm AP}$ and $\varphi_{\rm CT}$ denote the DCR of the array, the probabilities of afterpulsing and crosstalk, respectively. The photon rate $\vartheta_{\rm B}$ equals to $\Upsilon_{\rm PDE}P_{\rm B}/E_{\rm ph}$ where $P_{\rm B}$ denotes the power of the background light received by the array receiver. Note that in this work, it is assumed that the signalling bandwidth is less than the bandwidth of the light source, hence the channel frequency response is fairly flat over the signalling spectrum so that the channel-induced intersymbol interference is negligible \cite{Dimitrov12,Tsonev13}.  By denoting 
\begin{equation}
	\begin{cases}
		\mathcal{C}_{\rm s}=\Upsilon_{\rm PDE}\zeta\left(1+\varphi_{\rm AP}+\varphi_{\rm CT}\right)/E_{\rm ph},\\[2pt]
		\mathcal{C}_{\rm n}=(\vartheta_{\rm DCR}+\vartheta_{\rm B})\left(1+\varphi_{\rm AP}+\varphi_{\rm CT}\right),
	\end{cases}
\end{equation}
$\lambda_\mathrm{a}[n]$ can be simplified as
\begin{equation}\label{lama2}
	\lambda_\mathrm{a}[n]= \mathcal{C}_{\rm s}x_{\rm t}[n]+\mathcal{C}_{\rm n}.
\end{equation}

{The channel loss $\zeta$ depends on the specific OWC applications. For example, for FSO application the channel loss is random which is given by \cite{Jamali16}
	\begin{equation}\label{alphaFSO}
	\zeta_\mathrm{FSO}=h_fh_g=h_f \left[\mathrm{erf}\left(\frac{\sqrt{\pi}\omega}{2\sqrt{2}\theta L}\right)\right]^2,
	\end{equation}		
	where $\mathrm{erf}(\cdot)$ denotes the error function, $h_g$ refers to the geometric loss induced by diffraction of Gaussian beam, $h_f$ is the intensity fluctuation caused by atmospheric turbulence, $\omega$ is the receiver aperture diameter, $\theta$ represents the beam divergence angle at the transmitter, and $L$ is the link distance. The intensity fading $h_{f}$ is Gamma-Gamma distributed with PDF given by \cite{HuangHy}
	\begin{equation}
		\mathcal{F}_{h_{f}}(x)=\frac{2\left(\varrho \beta \right)^{\left(\varrho+\beta\right)/2}}{\Gamma(\varrho)\Gamma(\beta)}{x}^{\left(\varrho+\beta\right)/2-1} \mathcal{K}_{\varrho-\beta}\left(2\sqrt{\varrho\beta x}\right),
	\end{equation}
	where $\Gamma(\cdot)$ is the Gamma function, $\mathcal{K}_p(\cdot)$ is the modified Bessel function of the second kind, and the parameter $\varrho$ and $\beta$ are given by
	\begin{align}
		\varrho&=\bigg[\mathrm{exp}\bigg( \frac{0.49\chi^2}{\left(1+\!0.18\xi^2+0.56\chi^{12/5}\right)^{7/6}}\bigg)-1\bigg]^{-1},\nonumber\\
		\beta&=\bigg[\mathrm{exp}\bigg( \frac{0.51\chi^2\left(1\!+\!0.69\chi^{12/5}\right)^{-5/6}}{\left(1+\!0.9\xi^2+0.62\xi^2\chi^{12/5}\right)^{5/6}}\bigg)-1\bigg]^{-1},
	\end{align}
	respectively, with $\chi^2=0.5C_n^2k^{7/6}L^{11/6}$, $\xi^2=k_{\rm w}\omega^2/4L$ and $k_{\rm w}=2\pi/\lambda_\mathrm{op}$ where $\lambda_\mathrm{op}$ denotes the light wavelength and $C_n^2$ refers to the turbulence refraction structure parameter. For the application of VLC with the line-of-sight links, when the angle of incidence is within the receiver field-of-view, the the channel loss can be written as \cite{Kahn}
	\begin{equation}
		\zeta_\mathrm{VLC}=\frac{A_{\rm d}}{L^2} R_o(\upsilon_{\rm t}) G(\upsilon_{\rm r}) \mathrm{cos}(\upsilon_{\rm r}),
	\end{equation}
	where $A_{\rm d}$ refers to the physical area of the detector, $\upsilon_{\rm t}$ is the radiance angle, $\upsilon_{\rm r}$ is the angle of incidence at the receiver, $R_o(\upsilon_{\rm t})$ denotes the radiant intensity, and $G(\upsilon_{\rm r})$ is the concentrator gain.  }

According to the different employed quenching circuits, SPAD
can be classified into two main types, i.e., active quenching (AQ) and passive quenching (PQ) SPADs. AQ SPADs are normally with high complexity and cost, and are usually designed with small
array sizes \cite{Chitnis14}; whereas, PQ SPADs benefit from the simpler circuit design, higher PDE and low cost, therefore are commonly employed in the commercial low-cost photon counting receivers with large array sizes \cite{Matthews21}. In this work, we hence consider that the employed SPAD receiver is PQ-based. 
For each PQ SPAD in the array receiver, the average detected photon count during the time-domain OFDM sample duration $T_{\rm s}$ is given by\cite{Eisele11,HuangHy}
\begin{equation}\label{meanPQsingle}
	\mu_{\rm s}(x[n])= 	\frac{\lambda_\mathrm{a}[n]T_{\rm s}}{N_{\rm a}} \exp\left(-\frac{\lambda_\mathrm{a}[n]\tau_{\rm d}}{N_{\rm a}}\right), 
\end{equation}
where $N_{\rm a}$ denotes the number of SPAD pixels in the array and $\tau_{\rm d}$ refers to the dead time. The average detected photon count of the array receiver can be expressed as
\begin{equation}\label{meanPQ}
	\mu_{\rm a}(x[n])= N_{\rm a}\mu_{\rm s} (x[n])={\lambda_\mathrm{a}[n]T_{\rm s}} \exp\left(-\frac{\lambda_\mathrm{a}[n]\tau_{\rm d}}{N_{\rm a}}\right).
\end{equation}	
It is shown in (\ref{meanPQ}) that with the increase of the incident photon rate $\lambda_\mathrm{a}[n]$, the detected photon count firstly increases and then decreases. This indicates a nonlinear distortion to the received signal which results from the paralysis property of the PQ SPAD. The received photon rate which gives the highest detected photon count is $N_{\rm a}/\tau_{\rm d}$ and the corresponding detected photon count is $N_{\rm a}T_{\rm s}/(e \tau_{\rm d})$ where $e$ is the Euler's number \cite{Eisele11}. Note that in the absence of dead time, i.e., $\tau_{\rm d}=0$, the receiver nonlinearity vanishes, which is the case when an ideal photon counting receiver is considered. 

The variance of the detected photon count during $T_{\rm s}$ for a single PQ SPAD can be written as \cite{omote,Daniel00}
\begin{equation}
	\sigma_{\rm s}^2 (x[n])=\mu_{\rm s} (x[n])-\mu_{\rm s}^2 (x[n])\left[1-\left(1-\frac{\tau_{\rm d}}{T_{\rm s}}\right)^2\right],
\end{equation}
where it is assumed that the sample duration is larger than the dead time, i.e., $T_{\rm s}\geq \tau_{\rm d}$, which is the operation regime considered in this work. As the detected photon counts of SPADs in the array are independent, the variance of the detected photon count of the array is given by $\sigma^2_{\rm a}(x[n])=N_{\rm a}\sigma_{\rm s}^2 (x[n])$. Invoking the expression of $\mu_{\rm s} (x[n])$ in (\ref{meanPQsingle}), $\sigma_{\rm a}^2 (x[n])$ can be expressed as 
\begin{equation}\label{varPQ}
	\sigma_{\rm a}^2 (x[n])=\lambda_\mathrm{a}[n]T_{\rm s} \exp\left(-\frac{\lambda_\mathrm{a} [n]\tau_{\rm d}}{N_{\rm a}}\right)-\frac{\lambda_\mathrm{a}^2 [n]T_{\rm s}\tau_{\rm d}}{N_{\rm a}} \exp\left(-\frac{2\lambda_\mathrm{a} [n]\tau_{\rm d}}{N_{\rm a}}\right)\left(2-\frac{\tau_{\rm d}}{T_{\rm s}}\right)
\end{equation}
In (\ref{varPQ}), $\sigma_{\rm a}^2 (x[n])$ is the variance of shot noise of the detected photon count, which is a nonlinear function of the signal amplitude. Note that the dead time effect has made this signal dependent shot noise more complicated than the APD shot noise, whose power is proportional to the signal amplitude \cite{Safari15}. 

Denote the detected photon count during sample period $T_s$ after S/P mapping at the receiver as $y[n]$. When the size of the SPAD array is relatively large, due to the CLT,  $y[n]$ is approximately Gaussian distributed \cite{Khalighi17} with mean and variance given by (\ref{meanPQ}) and (\ref{varPQ}), respectively. Hence $y[n]$ can be rewritten as
\begin{equation}\label{yn1}
	y[n]=\mu_{\rm a}(x[n])+w_{\rm s}[n],
\end{equation} 
where $w_{\rm s}[n]$ represents the signal dependent shot noise which is Gaussian distributed with zero mean and signal dependent variance given by (\ref{varPQ}). As shown in Fig. \ref{system_figure}, the photon counting signal $y[n]$ is then converted back to the frequency-domain using the FFT operation given by
\begin{equation}
	Y[k]=\frac{1}{\sqrt{K}} \sum_{n=0}^{K-1} y[n] e^{-\frac{2\pi nkj}{K}}. 
\end{equation} 
After the single-tap equalization and parallel-to-serial (P/S) mapping, the received QAM signal can be achieved. The QAM demodulation can then be applied which results in the recovered bit steam. 

\section{Theoretical Analysis of SPAD-based OFDM}\label{theore} 
In the considered SPAD OFDM system, two nonlinear distortions exist. The first is the clipping-induced distortion due to the limited dynamic range of the light source as presented in (\ref{xclipped}). Such distortion also exists in standard OFDM systems with linear photodetectors \cite{Dimitrov12}.  The second is the additional unique SPAD-induced distortion given in (\ref{meanPQ}).  In this work, we combine these two nonlinear distortions and investigate the performance of the SPAD OFDM system under the combined nonlinear distortion. By substituting (\ref{xclipped}), (\ref{xt}), and (\ref{lama2}) into (\ref{meanPQ}), the combined nonlinear distortion of the transmitted signal $x[n]$ can be expressed as 
\begin{equation}\label{distortPQ}
	\mu_{\rm a}(x[n])=
	\begin{cases}
		\left(\psi_1\kappa_{\rm t}+\psi_2\right)	T_{\rm s} \exp\left[-\frac{\left(\psi_1\kappa_{\rm t}+\psi_2\right)	\tau_{\rm d}}{N_{\rm a}}\right], &  \mathrm{if}\,\,  x[n]\geq \kappa_{\rm t},\\[1.5ex]
		\left(\psi_1x[n]+\psi_2\right)T_{\rm s} \exp\left[-\frac{\left(\psi_1x[n]+\psi_2\right)	\tau_{\rm d}}{N_{\rm a}}\right],  &  \mathrm{if}\,\,  \kappa_{\rm b}<x[n]< \kappa_{\rm t},\\[1.5ex]
		{\left(\psi_1\kappa_{\rm b}+\psi_2\right)	T_{\rm s}} \exp\left[-\frac{\left(\psi_1\kappa_{\rm b}+\psi_2\right)	\tau_{\rm d}}{N_{\rm a}}\right], & \mathrm{if}\,\,  x[n]\leq \kappa_{\rm b}. 
	\end{cases}
\end{equation}
where 
\begin{equation}
	\psi_1=\mathcal{C}_{\rm s}\delta, \quad \rm{and}\,\,\, \psi_2=\mathcal{C}_{\rm s}P_{\rm bias}+\mathcal{C}_{\rm n}.
\end{equation}
The nonlinear distortion in an OFDM-based system can be described by a gain factor ($\alpha$) and an additional distortion-induced noise ($w_{\rm d}[n]$), both of which can be explained and quantified by the Bussgang theorem \cite{Tsonev13,Dimitrov12}. According to the Bussgang theorem, we have
\begin{equation}\label{BT1}
	\mu_{\rm a}(x[n])= \alpha x[n] + w_{\rm d}[n],
\end{equation}
where
\begin{equation}\label{BT2}	
{\mathbb{E}}\left\{x[n]w_{\rm d}[n]\right\} = 0,
\end{equation}
holds. Note that $\mathbb{E}\{\cdot\}$ denotes the expectation operation.  In the following discussion, both $\alpha$ and variance of $w_{\rm d}[n]$ will be derived.  
\begin{Proposition}\label{Prop_alpha}
The gain factor $\alpha$ can be expressed as 
\begin{align}\label{alpha_fin}
	\alpha=&\frac{\psi_1^2\tau_{\rm d}T_{\rm s}}{\sqrt{2\pi}N_{\rm a}}\left\{\exp \left[-\frac{\kappa_{\rm t}^2}{2}-\frac{\tau_{\rm d}}{N_{\rm a}}\left(\psi_1\kappa_{\rm t}+\psi_2\right)\right]
	-\exp\left[-\frac{\kappa_{\rm b}^2}{2}-\frac{\tau_{\rm d}}{N_{\rm a}}\left(\psi_1\kappa_{\rm b}+\psi_2\right)\right]\right\}\\[0.5em]
	&+{\psi_1}\,T_{\rm s}\,e^{-\frac{\psi_{2}\tau_{\rm d}}{N_{\rm a}}+\frac{\psi_{1}^2\tau_{\rm d}^2}{2N_{\rm a}^2}}\left[1+\frac{\psi_{1}^2\tau_{\rm d}^2}{N_{\rm a}^2}-\frac{\psi_{2}\tau_{\rm d}}{N_{\rm a}}\right]\left[Q\left(\kappa_{\rm b}+\frac{\psi_{1}\tau_{\rm d}}{N_{\rm a}}\right)-Q\left(\kappa_{\rm t}+\frac{\psi_{1}\tau_{\rm d}}{N_{\rm a}}\right)\right].\nonumber
\end{align} 
\end{Proposition}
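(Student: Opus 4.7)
The plan is to apply the Bussgang theorem directly. Multiplying (\ref{BT1}) by $x[n]$, taking expectation, and invoking (\ref{BT2}) together with $\mathbb{E}\{x[n]^2\}=1$ gives $\alpha = \mathbb{E}\{x[n]\,\mu_{\rm a}(x[n])\}$. Since $K$ is sufficiently large for $x[n]$ to be treated as $\mathcal{N}(0,1)$, this is a one-dimensional Gaussian integral,
$$\alpha = \int_{-\infty}^{\infty} x\,\mu_{\rm a}(x)\,f(x)\,dx,$$
which I would split according to the three branches of (\ref{distortPQ}). On the two tails $x\le\kappa_{\rm b}$ and $x\ge\kappa_{\rm t}$, $\mu_{\rm a}$ is constant and factors out, leaving the elementary identities $\int_{-\infty}^{\kappa_{\rm b}} x f(x)\,dx=-f(\kappa_{\rm b})$ and $\int_{\kappa_{\rm t}}^{\infty} x f(x)\,dx=f(\kappa_{\rm t})$.

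For the middle branch I would complete the square to combine the two exponents,
$$-\frac{x^2}{2} - \frac{(\psi_1 x + \psi_2)\tau_{\rm d}}{N_{\rm a}} = -\frac{1}{2}\left(x + \frac{\psi_1\tau_{\rm d}}{N_{\rm a}}\right)^2 + \frac{\psi_1^2\tau_{\rm d}^2}{2N_{\rm a}^2} - \frac{\psi_2\tau_{\rm d}}{N_{\rm a}},$$
whose constant part supplies exactly the prefactor appearing in Part 2 of (\ref{alpha_fin}). The substitution $u=x+\psi_1\tau_{\rm d}/N_{\rm a}$, together with the expansion of $x(\psi_1 x+\psi_2)$ as a quadratic in $u$, reduces the middle contribution to the three standard moment integrals $\int_a^b u^n\phi(u)\,du$ for $n=0,1,2$ between $a=\kappa_{\rm b}+\psi_1\tau_{\rm d}/N_{\rm a}$ and $b=\kappa_{\rm t}+\psi_1\tau_{\rm d}/N_{\rm a}$, each of which has a known closed form in terms of $\phi(\cdot)$ and $Q(\cdot)$.

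The final step is bookkeeping. I would collect the $Q$-terms and the $\phi$-terms separately. The coefficient in front of $Q(a)-Q(b)$ simplifies to $\psi_1[1+(\psi_1\tau_{\rm d}/N_{\rm a})^2-\psi_2\tau_{\rm d}/N_{\rm a}]$, which together with the prefactor reproduces Part 2 of (\ref{alpha_fin}). For Part 1 I would recognise that the shifted exponent satisfies $-\tfrac12(\kappa_{\rm b,t}+\psi_1\tau_{\rm d}/N_{\rm a})^2+\psi_1^2\tau_{\rm d}^2/(2N_{\rm a}^2)-\psi_2\tau_{\rm d}/N_{\rm a} = -\tfrac12\kappa_{\rm b,t}^2-(\psi_1\kappa_{\rm b,t}+\psi_2)\tau_{\rm d}/N_{\rm a}$, which is exactly the exponent in Part 1.

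The main obstacle is this last combination: the $\phi(a)$ and $\phi(b)$ terms from the middle integral arrive with polynomial coefficients such as $\psi_1\kappa_{\rm b}+\psi_2-\psi_1^2\tau_{\rm d}/N_{\rm a}$, whereas the two tail integrals contribute companion terms $-(\psi_1\kappa_{\rm b}+\psi_2)$ and $+(\psi_1\kappa_{\rm t}+\psi_2)$ carrying the same exponential factors. A sign error or missed cross-cancellation would leave a spurious linear-in-$\kappa$ residue that does not appear in (\ref{alpha_fin}). I would therefore verify the cancellation endpoint by endpoint, checking that only the clean $\pm\psi_1^2\tau_{\rm d}/(\sqrt{2\pi}\,N_{\rm a})$ coefficient that defines Part 1 survives.
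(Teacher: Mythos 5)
Your proposal is correct and follows essentially the same route as the paper: Bussgang gives $\alpha=\mathbb{E}\{x\,\mu_{\rm a}(x)\}$, the integral is split over the three clipping branches, the middle branch is handled by completing the square and reducing to Gaussian moment integrals, and the residual $\phi$-terms cancel against the tail contributions to leave only the $\psi_1^2\tau_{\rm d}T_{\rm s}/(\sqrt{2\pi}N_{\rm a})$ coefficient in Part 1 (the paper organizes the middle integral as $\mathcal{W}_1,\mathcal{W}_2$ by powers of $x$ before shifting, but this is the same computation). The cancellation you flag as the main obstacle is indeed exactly the step where the paper's intermediate expression collapses to the stated closed form.
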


\begin{proof}
Invoking (\ref{BT1}) and (\ref{BT2}), the gain factor $\alpha$ can be written as
\begin{equation}\label{alpha}
	\alpha = {{ \mathbb{E}}\left\{x\mu_{\rm a}(x)\right\}}/{{ \mathbb{E}}\left\{x^2\right\}}={{ \mathbb{E}}\left\{x\mu_{\rm a}(x)\right\}},
\end{equation}
where the sample index $n$ is dropped for simplicity. After some mathematical manipulations, one can get
\begin{equation}\label{alpha1}
\alpha=\frac{\left(\psi_{1} \kappa_{\rm t}+\psi_{2}\right)T_{\rm s}}{\sqrt{2\pi}}e^{-\frac{\left(\psi_{1} \kappa_{\rm t}+\psi_{2}\right) \tau_{\rm d}}{N_{\rm a}}-\frac{1}{2}\kappa_{\rm t}^2}-\frac{\left(\psi_{1} \kappa_{\rm b}+\psi_{2}\right)T_{\rm s}}{\sqrt{2\pi}}e^{-\frac{\left(\psi_{1} \kappa_{\rm b}+\psi_{2}\right) \tau_{\rm d}}{N_{\rm a}}-\frac{1}{2}\kappa_{\rm b}^2}+\mathcal{W},
\end{equation}
where the term $\mathcal{W}$ is given by 
\begin{equation}\label{Wx}
	\mathcal{W}=\int_{\kappa_{\rm b}}^{\kappa_{\rm t}} x\mu_{\rm a}(x) f(x) \,\mathrm{d} x.
\end{equation}
By substituting (\ref{distortPQ}) into (\ref{Wx}), $\mathcal{W}$ can be calculated as
\begin{align}\label{W}
	\mathcal{W}
	=&\frac{T_{\rm s}\psi_{1}}{\sqrt{2\pi}}\underbrace{\int_{\kappa_{\rm b}}^{\kappa_{\rm t}} x^2 {\rm exp}\left[-\frac{\left(\psi_{1} x+\psi_{2}\right)	\tau_{\rm d}}{N_{\rm a}}-\frac{1}{2}x^2\right] \,\mathrm{d} x}_{\mathcal{W}_1}\\
	&\qquad +\frac{T_{\rm s}\psi_{2}}{\sqrt{2\pi}}    \underbrace{\int_{\kappa_{\rm b}}^{\kappa_{\rm t}} x \, {\exp}\left[-\frac{\left(\psi_{1} x+\psi_{2}\right)	\tau_{\rm d}}{N_{\rm a}}-\frac{1}{2}x^2\right] \,\mathrm{d} x}_{\mathcal{W}_2}.\nonumber
\end{align} 
The integral $\mathcal{W}_1$ and $\mathcal{W}_2$ can be solved analytically as given by 
\begin{align}\label{W12}
\mathcal{W}_1=&\,e^{-\frac{\psi_{2} \tau_{\rm d}}{N_{\rm a}}+\frac{\psi_{1}^2\tau_{\rm d}^2}{2N_{\rm a}^2}}\Bigg\{\sqrt{\frac{\pi}{2}}\left(1\!+\!\frac{\psi_{1}^2\tau_{\rm d}^2}{N_{\rm a}^2}\right)\!\!\left[\mathrm{erf}\left(\frac{\kappa_{\rm t}+\frac{\psi_{1}\tau_{\rm d}}{N_{\rm a}}}{\sqrt{2}}\right)\!-\!\mathrm{erf}\!\left(\frac{\kappa_{\rm b}+\frac{\psi_{1}\tau_{\rm d}}{N_{\rm a}}}{\sqrt{2}}\right)\right]\!+\!g(\kappa_{\rm b})\!-\!g(\kappa_{\rm t})\Bigg\},\nonumber\\[0.5em]	\mathcal{W}_2=&\,e^{-\frac{\psi_{2}\tau_{\rm d}}{N_{\rm a}}+\frac{\psi_{1}^2\tau_{\rm d}^2}{2N_{\rm a}^2}}\Bigg[ \sqrt{2\pi}f\left(\kappa_{\rm b}+\frac{\psi_{1}\tau_{\rm d}}{N_{\rm a}}\right)-\sqrt{2\pi}f\left(\kappa_{\rm t}+\frac{\psi_{1}\tau_{\rm d}}{N_{\rm a}}\right)\nonumber\\[0.5em]
&\qquad \qquad +\sqrt{\frac{{\pi}}{2}}\frac{\psi_{1}\tau_{\rm d}}{N_{\rm a}} \mathrm{erfc}\left(\frac{\kappa_{\rm t}+\frac{\psi_{1}\tau_{\rm d}}{N_{\rm a}}}{\sqrt{2}}\right)-\sqrt{\frac{{\pi}}{2}}\frac{\psi_{1}\tau_{\rm d}}{N_{\rm a}} \mathrm{erfc}\left(\frac{\kappa_{\rm b}+\frac{\psi_{1}\tau_{\rm d}}{N_{\rm a}}}{\sqrt{2}}\right) \Bigg],
\end{align}
where
\begin{equation}
	g(\kappa)=\sqrt{2\pi}\left(\kappa-\frac{\psi_{1}\tau_{\rm d}}{N_{\rm a}}\right)f\left(\kappa+\frac{\psi_{1}\tau_{\rm d}}{N_{\rm a}}\right),
\end{equation}
and  $\mathrm{erfc}(\cdot)$ denotes the complementary error function. By substituting (\ref{W12}) and (\ref{W}) into (\ref{alpha1}), the analytical expression of $\alpha$ can be obtained as presented in (\ref{alpha_fin}).  
\end{proof}

We can consider a special case when $\tau_{\rm d}=0$ which refers to the linear ideal photon counting receiver. Substituting $\tau_{\rm d}=0$ into (\ref{alpha_fin}) leads to a gain factor of
\begin{equation}
	\alpha_{\rm id}=\psi_{1}T_{\rm s} \left[Q\left(\kappa_{\rm b}\right)-Q\left(\kappa_{\rm t}\right)\right].
\end{equation}      
This result is in line with the derived gain factor of OFDM systems with traditional linear receivers \cite{Dimitrov12,Chen16}.

According to (\ref{BT1}) and (\ref{BT2}), the variance of the distortion induced noise, denoted as $\sigma_{w_{\rm d}}^2$, can be calculated by
\begin{equation}\label{sigma_wd}
	\sigma_{w_{\rm d}}^2=\mathbb{E}\left\{\mu^2_{\rm a}(x)\right\}-\mathbb{E}^2\left\{\mu_{\rm a}(x)\right\}-\alpha^2.
\end{equation}
Therefore, to calculate $\sigma_{w_{\rm d}}^2$ the first two moments of $\mu_{\rm a}(x)$ should be investigated.
\begin{Proposition} 
The average of the received distorted signal $\mathbb{E}\left\{\mu_{\rm a}(x)\right\}$ can be expressed as 
	\begin{align}\label{mean_fin}
		\mathbb{E}\left\{\mu_{\rm a}(x)\right\}=&\left(\psi_{1} \kappa_{\rm b}+\psi_{2}\right)T_{\rm s}\,e^{-\frac{\left(\psi_{1} \kappa_{\rm b}+\psi_{2}\right) \tau_{\rm d}}{N_{\rm a}}}{Q}\left(-\kappa_{\rm b}\right)+\left(\psi_{1} \kappa_{\rm t}+\psi_{2}\right)T_{\rm s}\,e^{-\frac{\left(\psi_{1} \kappa_{\rm t}+\psi_{2}\right) \tau_{\rm d}}{N_{\rm a}}}{Q}\left(\kappa_{\rm t}\right)\nonumber\\[0.4em]
		&+\frac{T_{\rm s}\psi_{1}}{\sqrt{2\pi}}\,e^{-\frac{\psi_{2}\tau_{\rm d}}{N_{\rm a}}}\left[e^{-\frac{1}{2}\kappa_{\rm b}^2-\frac{\psi_{1}\tau_{\rm d}\kappa_{\rm b}}{N_{\rm a}}  }-e^{-\frac{1}{2}\kappa_{\rm t}^2-\frac{\psi_{1}\tau_{\rm d}\kappa_{\rm t}}{N_{\rm a}}  }\right]\nonumber\\[0.4em]
		&+{T_{\rm s}}\left(\frac{\tau_{\rm d}\psi_1^2}{N_a}-\psi_2\right)e^{-\frac{\psi_2\tau_{\rm d}}{N_{\rm a}}+\frac{\psi_1^2\tau_{\rm d}^2}{2N_{\rm a}^2}}\left[{Q}\left({\kappa_{\rm t}+\frac{\psi_{1}\tau_{\rm d}}{N_{\rm a}}}\right)-{Q}\left({\kappa_{\rm b}+\frac{\psi_{1}\tau_{\rm d}}{N_{\rm a}}}\right)\right].
	\end{align} 
\end{Proposition}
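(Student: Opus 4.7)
The plan is to compute $\mathbb{E}\{\mu_{\rm a}(x)\}=\int_{-\infty}^{\infty}\mu_{\rm a}(x)f(x)\,\mathrm{d}x$ directly, exploiting the fact that $x$ is (approximately) standard Gaussian and that $\mu_{\rm a}(x)$ as given in (\ref{distortPQ}) is piecewise. First I would split the integral into three pieces corresponding to the three branches in (\ref{distortPQ}). On the outer branches $\mu_{\rm a}$ is a constant, so those integrals just produce the tail probabilities $\int_{-\infty}^{\kappa_{\rm b}}f(x)\,\mathrm{d}x=Q(-\kappa_{\rm b})$ and $\int_{\kappa_{\rm t}}^{\infty}f(x)\,\mathrm{d}x=Q(\kappa_{\rm t})$, multiplied by the corresponding constant values of $\mu_{\rm a}$. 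These immediately give the first line of (\ref{mean_fin}).

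The substantive work is the middle integral
\begin{equation*}
\mathcal{I}=\frac{T_{\rm s}}{\sqrt{2\pi}}\int_{\kappa_{\rm b}}^{\kappa_{\rm t}}(\psi_1 x+\psi_2)\exp\!\left[-\frac{(\psi_1 x+\psi_2)\tau_{\rm d}}{N_{\rm a}}-\frac{x^2}{2}\right]\mathrm{d}x.
\end{equation*}
I would split this as $\mathcal{I}=T_{\rm s}\psi_1\mathcal{I}_1+T_{\rm s}\psi_2\mathcal{I}_2$ with obvious definitions, then complete the square in the exponent: $-\psi_1\tau_{\rm d}x/N_{\rm a}-x^2/2=-\tfrac{1}{2}(x+\psi_1\tau_{\rm d}/N_{\rm a})^2+\psi_1^2\tau_{\rm d}^2/(2N_{\rm a}^2)$. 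A change of variables $u=x+\psi_1\tau_{\rm d}/N_{\rm a}$ then turns both $\mathcal{I}_1$ and $\mathcal{I}_2$ into Gaussian integrals over the shifted interval $[\kappa_{\rm b}+\psi_1\tau_{\rm d}/N_{\rm a},\kappa_{\rm t}+\psi_1\tau_{\rm d}/N_{\rm a}]$, with a common prefactor $\exp(-\psi_2\tau_{\rm d}/N_{\rm a}+\psi_1^2\tau_{\rm d}^2/(2N_{\rm a}^2))$.

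After the substitution, $\mathcal{I}_2$ is a plain Gaussian tail-difference and evaluates to a $Q$-function difference. The integral $\mathcal{I}_1$ contains a factor $u-\psi_1\tau_{\rm d}/N_{\rm a}$, so it splits into (i) $\int u\,e^{-u^2/2}\mathrm{d}u=-e^{-u^2/2}$, which produces the bare exponentials appearing in the second line of (\ref{mean_fin}) once the prefactor $\exp(\psi_1^2\tau_{\rm d}^2/(2N_{\rm a}^2))$ is absorbed into $-\tfrac{1}{2}(\kappa+\psi_1\tau_{\rm d}/N_{\rm a})^2$, and (ii) another $Q$-function difference term, this one proportional to $-\psi_1^2\tau_{\rm d}/N_{\rm a}$.

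Finally I would combine the $Q$-function contributions from $\psi_2\mathcal{I}_2$ and from the second part of $\psi_1\mathcal{I}_1$; they share the same $Q$-function difference and the same Gaussian prefactor, so they collapse into a single term with coefficient $T_{\rm s}(\psi_1^2\tau_{\rm d}/N_{\rm a}-\psi_2)$, matching the third line of (\ref{mean_fin}). The main obstacle is purely bookkeeping: carefully tracking the sign of the $Q$-function difference (noting $Q(a)-Q(b)=-(Q(b)-Q(a))$) and correctly absorbing the completed-square constant into the bare exponentials so that the $\psi_1^2\tau_{\rm d}^2/(2N_{\rm a}^2)$ factors cancel in the second line but survive in the third.
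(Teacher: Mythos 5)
Your proposal is correct and follows essentially the same route as the paper: split the expectation according to the three branches of $\mu_{\rm a}(x)$, evaluate the two clipped tails as $Q$-function weights, and reduce the middle integral by completing the square into shifted Gaussian integrals (the paper simply reuses its earlier intermediate $\mathcal{W}_2$ from the proof of Proposition~1 for the $x\,e^{-(\cdot)}$ part rather than recomputing it). The sign bookkeeping you describe for merging the two $Q$-difference contributions into the coefficient $T_{\rm s}\left(\psi_1^2\tau_{\rm d}/N_{\rm a}-\psi_2\right)$ checks out.
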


\begin{proof}
The expectation of $\mu_{\rm a}(x)$ is given by $\int_{-\infty}^{+\infty} \mu_{\rm a}(x) f(x) \,\mathrm{d} x$ which can be further calculated as 
\begin{align}\label{Eua}
	\mathbb{E}\left\{\mu_{\rm a}(x)\right\}=&\left(\psi_{1} \kappa_{\rm b}+\psi_{2}\right)T_{\rm s}e^{-\frac{\left(\psi_{1} \kappa_{\rm b}+\psi_{2}\right) \tau_{\rm d}}{N_{\rm a}}}Q\left(-\kappa_{\rm b}\right)\\
	&\,+\left(\psi_{1} \kappa_{\rm t}+\psi_{2}\right)T_{\rm s}e^{-\frac{\left(\psi_{1} \kappa_{\rm t}+\psi_{2}\right) \tau_{\rm d}}{N_{\rm a}}}Q\left(\kappa_{\rm t}\right)+\underbrace{\int_{\kappa_{\rm b}}^{\kappa_{\rm t}} \mu_{\rm a}(x) f(x) \,\mathrm{d} x}_{\mathcal{M}}.\nonumber
\end{align}
After some manipulations, the integral $\mathcal{M}$ can be expressed as 
\begin{equation}\label{M}
	\mathcal{M}=\frac{T_{\rm s}\psi_{1}}{\sqrt{2\pi}}\mathcal{W}_2+\frac{T_{\rm s}\psi_{2}}{\sqrt{2\pi}}e^{-\frac{\psi_{2}\tau_{\rm d}}{N_{\rm a}}}\underbrace{\int_{\kappa_{\rm b}}^{\kappa_{\rm t}} e^{-\frac{1}{2}x^2-\frac{\psi_{1}\tau_{\rm d}}{N_{\rm a}}x} \,\mathrm{d} x}_{\mathcal{M}_1}.
\end{equation}
The term $\mathcal{M}_1$ can be solved as 
\begin{equation}\label{M1}
	\mathcal{M}_1=\sqrt{\frac{\pi}{2}}\,e^{\frac{\psi_{1}^2\tau_{\rm d}^2}{2N_{\rm a}^2}}\left[\mathrm{erf}\left(\frac{\kappa_{\rm t}+\frac{\psi_{1}\tau_{\rm d}}{N_{\rm a}}}{\sqrt{2}}\right)-\mathrm{erf}\left(\frac{\kappa_{\rm b}+\frac{\psi_{1}\tau_{\rm d}}{N_{\rm a}}}{\sqrt{2}}\right)\right].
\end{equation}
Therefore, by substituting (\ref{M1}), (\ref{M}) into (\ref{Eua}) and after some mathematical manipulations, the first moment of the received distorted signal can be written as (\ref{mean_fin}).
\end{proof}

\begin{Proposition} 
The second moment of the received distorted signal, i.e.,  $\mathbb{E}\left\{\mu^2_{\rm a}(x)\right\}$ is calculated as 
	\begin{align}\label{sec_mom}
	\mathbb{E}&\left\{\mu_{\rm a}^2(x)\right\}=\\
	&\;\;\frac{T_{\rm s}^2}{\sqrt{2\pi}}e^{-\frac{2\left(\psi_{1} \kappa_{\rm b}+\psi_{2}\right) \tau_{\rm d}}{N_{\rm a}}}\bigg\{ \left[\psi_1^2\kappa_{\rm b}-\frac{2\psi_1^3\tau_{\rm d}}{N_{\rm a}}+2\psi_1\psi_2\right]e^{-\frac{1}{2}\kappa_{\rm b}^2}+  \sqrt{2\pi}\left(\psi_{1} \kappa_{\rm b}+\psi_{2}\right)^2Q(-\kappa_{\rm b})\bigg\}\nonumber\\[0.5em]
	&\;\;\;\;-\frac{T_{\rm s}^2}{\sqrt{2\pi}}e^{-\frac{2\left(\psi_{1} \kappa_{\rm t}+\psi_{2}\right) \tau_{\rm d}}{N_{\rm a}}}\bigg\{ \left[\psi_1^2\kappa_{\rm t}-\frac{2\psi_1^3\tau_{\rm d}}{N_{\rm a}}+2\psi_1\psi_2\right]e^{-\frac{1}{2}\kappa_{\rm t}^2}- \sqrt{2\pi}\left(\psi_{1} \kappa_{\rm t}+\psi_{2}\right)^2Q(\kappa_{\rm t})\bigg\}\nonumber\\[0.5em]
	&\;\;\;\;+{T_{\rm s}^2}e^{-\frac{2\psi_2\tau_{\rm d}}{N_{\rm a}}+\frac{2\psi_1^2\tau_{\rm d}^2}{N_{\rm a}^2}}\left[\psi_1^2+\left(\frac{2\psi^2_1\tau_{\rm d}}{N_{\rm a}}-{\psi_2}\right)^2\right]\left[Q\bigg(\kappa_{\rm b}+\frac{2\psi_{1}\tau_{\rm d}}{N_{\rm a}}\bigg)-Q\bigg(\kappa_{\rm t}+\frac{2\psi_{1}\tau_{\rm d}}{N_{\rm a}}\bigg)\right].\nonumber
\end{align}
\end{Proposition}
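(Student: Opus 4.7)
The plan is to compute $\mathbb{E}\{\mu_{\rm a}^2(x)\}=\int_{-\infty}^{\infty}\mu_{\rm a}^2(x)\,f(x)\,\mathrm{d}x$ directly by splitting the domain at $\kappa_{\rm b}$ and $\kappa_{\rm t}$ according to the piecewise description (\ref{distortPQ}). Because $\mu_{\rm a}$ is constant on the two clipped tails, those contributions are simply $(\psi_{1}\kappa_{\rm b}+\psi_{2})^{2}T_{\rm s}^{2}e^{-2(\psi_{1}\kappa_{\rm b}+\psi_{2})\tau_{\rm d}/N_{\rm a}}\,Q(-\kappa_{\rm b})$ and $(\psi_{1}\kappa_{\rm t}+\psi_{2})^{2}T_{\rm s}^{2}e^{-2(\psi_{1}\kappa_{\rm t}+\psi_{2})\tau_{\rm d}/N_{\rm a}}\,Q(\kappa_{\rm t})$, which already supply the $\sqrt{2\pi}(\psi_{1}\kappa_{\rm b}+\psi_{2})^{2}Q(-\kappa_{\rm b})$ and $\sqrt{2\pi}(\psi_{1}\kappa_{\rm t}+\psi_{2})^{2}Q(\kappa_{\rm t})$ summands in (\ref{sec_mom}).

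\textbf{Middle region via completing the square.} On $\kappa_{\rm b}<x<\kappa_{\rm t}$ the integrand takes the form $\tfrac{T_{\rm s}^{2}}{\sqrt{2\pi}}(\psi_{1}x+\psi_{2})^{2}\exp[-2(\psi_{1}x+\psi_{2})\tau_{\rm d}/N_{\rm a}-x^{2}/2]$. Mirroring the tactic used in the proofs of Propositions~1 and 2, I would rewrite the exponent as $-\tfrac{1}{2}(x+2\psi_{1}\tau_{\rm d}/N_{\rm a})^{2}+2\psi_{1}^{2}\tau_{\rm d}^{2}/N_{\rm a}^{2}-2\psi_{2}\tau_{\rm d}/N_{\rm a}$, pull out the $x$-independent exponential factor, and substitute $u=x+2\psi_{1}\tau_{\rm d}/N_{\rm a}$. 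Setting $c:=\psi_{2}-2\psi_{1}^{2}\tau_{\rm d}/N_{\rm a}$, the polynomial factor becomes $(\psi_{1}u+c)^{2}=\psi_{1}^{2}u^{2}+2\psi_{1}c\,u+c^{2}$, so that what remains is a linear combination of three standard Gaussian moments $\int_{u_{\rm b}}^{u_{\rm t}}u^{k}e^{-u^{2}/2}\,\mathrm{d}u$ for $k=0,1,2$, with $u_{\rm b}:=\kappa_{\rm b}+2\psi_{1}\tau_{\rm d}/N_{\rm a}$ and $u_{\rm t}$ defined analogously.

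\textbf{Assembly.} The $k=0$ and $k=1$ moments are immediate ($Q$-function and pure exponentials), while the $k=2$ moment yields the boundary piece $-u\,e^{-u^{2}/2}\big|_{u_{\rm b}}^{u_{\rm t}}$ plus another $Q$-function contribution after one integration by parts. Collecting the terms that multiply $e^{-u_{\rm b}^{2}/2}$ gives the coefficient $\psi_{1}^{2}u_{\rm b}+2\psi_{1}c=\psi_{1}^{2}\kappa_{\rm b}-2\psi_{1}^{3}\tau_{\rm d}/N_{\rm a}+2\psi_{1}\psi_{2}$, which is exactly the polynomial prefactor in (\ref{sec_mom}); the residual exponential $e^{-2\psi_{2}\tau_{\rm d}/N_{\rm a}+2\psi_{1}^{2}\tau_{\rm d}^{2}/N_{\rm a}^{2}}\cdot e^{-u_{\rm b}^{2}/2}$ simplifies to $e^{-\kappa_{\rm b}^{2}/2}e^{-2(\psi_{1}\kappa_{\rm b}+\psi_{2})\tau_{\rm d}/N_{\rm a}}$, reproducing the claimed form. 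The analogous calculation at $u_{\rm t}$ yields the $\kappa_{\rm t}$ companion with the opposite sign, and the $Q$-function remainders from the $k=0$ and $k=2$ moments pool together with coefficient $\psi_{1}^{2}+c^{2}=\psi_{1}^{2}+(2\psi_{1}^{2}\tau_{\rm d}/N_{\rm a}-\psi_{2})^{2}$, which is precisely the last line of (\ref{sec_mom}).

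\textbf{Main obstacle.} No single step is hard---each integral is elementary---but the bookkeeping is delicate. I expect the principal difficulty to be verifying that the boundary contribution from the integration by parts on $u^{2}e^{-u^{2}/2}$ merges with the $k=1$ integral into the compact coefficient $\psi_{1}^{2}\kappa_{\rm b}-2\psi_{1}^{3}\tau_{\rm d}/N_{\rm a}+2\psi_{1}\psi_{2}$ (and its $\kappa_{\rm t}$ counterpart) rather than a messier polynomial, and that the Gaussian shift absorbed into $u_{\rm b}$, $u_{\rm t}$ recombines with the completing-the-square prefactor to reconstitute precisely the exponents $-\kappa_{\rm b}^{2}/2-2(\psi_{1}\kappa_{\rm b}+\psi_{2})\tau_{\rm d}/N_{\rm a}$ displayed in (\ref{sec_mom}); a sign error in either stage would destroy the matching.
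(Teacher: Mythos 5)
Your proposal is correct and follows essentially the same route as the paper: split the expectation at the clipping levels, pick up the two constant tail contributions, complete the square in the exponent of the middle integral, and reduce it to the $k=0,1,2$ Gaussian moments over $[\kappa_{\rm b}+2\psi_1\tau_{\rm d}/N_{\rm a},\,\kappa_{\rm t}+2\psi_1\tau_{\rm d}/N_{\rm a}]$. The only cosmetic difference is that you expand $(\psi_1 x+\psi_2)^2$ in the shifted variable $u$, whereas the paper expands in $x$ and recycles the integrals $\mathcal{W}_1'$ and $\mathcal{W}_2'$ from Proposition~1 with $\psi_1$ replaced by $2\psi_1$; your coefficient bookkeeping (e.g., $\psi_1^2 u_{\rm b}+2\psi_1 c=\psi_1^2\kappa_{\rm b}-2\psi_1^3\tau_{\rm d}/N_{\rm a}+2\psi_1\psi_2$) checks out.
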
 

\begin{proof}
The second moment of $\mu_{\rm a}(x)$ is given by $\mathbb{E}\left\{\mu^2_{\rm a}(x)\right\}=\int_{-\infty}^{+\infty} \mu^2_{\rm a}(x) f(x) \,\mathrm{d} x$ which can be further calculated as 
\begin{align}\label{Eua2}
\mathbb{E}\left\{\mu^2_{\rm a}(x)\right\}=&{\left(\psi_{1} \kappa_{\rm t}+\psi_{2}\right)^2T_{\rm s}^2}e^{-\frac{2\left(\psi_{1} \kappa_{\rm t}+\psi_{2}\right) \tau_{\rm d}}{N_{\rm a}}}Q(\kappa_{\rm t})
\\
&\,\,+{\left(\psi_{1} \kappa_{\rm b}+\psi_{2}\right)^2T_{\rm s}^2}e^{-\frac{2\left(\psi_{1} \kappa_{\rm b}+\psi_{2}\right) \tau_{\rm d}}{N_{\rm a}}}Q(-\kappa_{\rm b})+\mathcal{U},	\nonumber
\end{align}
where the term $\mathcal{U}=\int_{\kappa_{\rm b}}^{\kappa_{\rm t}} \mu^2_{\rm a}(x) f(x) \,\mathrm{d} x$. After some mathematical calculations, $\mathcal{U}$ can be written as
\begin{equation}\label{U}
\mathcal{U}=\frac{T_{\rm s}^2}{\sqrt{2\pi}}\,e^{-\frac{2\psi_2\tau_{\rm d}}{N_{\rm a}}+\frac{2\psi_1^2\tau_{\rm d}^2}{N_{\rm a}^2}}\left( \psi_{1}^2\,{\mathcal{U}_1}+2\psi_1\psi_2\,\mathcal{U}_2+\psi_2^2\,\mathcal{U}_3\right),
\end{equation}
where 
\begin{equation}
\mathcal{U}_1=\int_{\kappa_{\rm b}}^{\kappa_{\rm t}}{x^2}e^{-\frac{1}{2}\left(x+\frac{2\psi_{1}\tau_{\rm d}}{N_{\rm a}}\right)^2}\,\mathrm{d} x=e^{\frac{\psi_{2}\tau_{\rm d}}{N_{\rm a}}-\frac{2\psi_{1}^2\tau_{\rm d}^2}{N_{\rm a}^2}}\mathcal{W}_1',
\end{equation}
\begin{equation}
	\mathcal{U}_2=\int_{\kappa_{\rm b}}^{\kappa_{\rm t}}xe^{-\frac{1}{2}\left(x+\frac{2\psi_{1}\tau_{\rm d}}{N_{\rm a}}\right)^2}\,\mathrm{d} x=e^{\frac{\psi_{2}\tau_{\rm d}}{N_{\rm a}}-\frac{2\psi_{1}^2\tau_{\rm d}^2}{N_{\rm a}^2}}\mathcal{W}_2',
\end{equation}
and 
\begin{equation}
	\mathcal{U}_3=\int_{\kappa_{\rm b}}^{\kappa_{\rm t}}e^{-\frac{1}{2}\left(x+\frac{2\psi_{1}\tau_{\rm d}}{N_{\rm a}}\right)^2}\,\mathrm{d} x=\sqrt{{2\pi}}\left[Q\left(\kappa_{\rm b}+\frac{2\psi_{1}\tau_{\rm d}}{N_{\rm a}}\right)-Q\left(\kappa_{\rm t}+\frac{2\psi_{1}\tau_{\rm d}}{N_{\rm a}}\right)\right].
\end{equation}	
Note that $\mathcal{W}_1'$ and $\mathcal{W}_2'$  are the same as $\mathcal{W}_1$ and $\mathcal{W}_2$ presented in (\ref{W12}) but replacing $\psi_{1}$ by $2\psi_{1}$. Finally, by plugging (\ref{U}) into (\ref{Eua2}), the second moment of $\mu_{\rm a}(x)$ can be calculated as presented in (\ref{sec_mom}). 
\end{proof}

By substituting (\ref{alpha_fin}), (\ref{mean_fin}), and (\ref{sec_mom}) into (\ref{sigma_wd}), the analytical expression for the variance of the distortion-induced noise $\sigma_{w_{\rm d}}^2$ can be achieved. For ideal photon counting receiver,  by substituting $\tau_{\rm d}=0$ into (\ref{sigma_wd}), this distortion-induced noise turns to be only introduced by the signal clipping as 
\begin{align}
	\sigma_{w_{\rm d, id}}^2=\;&\psi_{1}^2T_{\rm s}^2\left\{Q(\kappa_{\rm b})-Q(\kappa_{\rm t})+\kappa_{\rm b}f(\kappa_{\rm b})-\kappa_{\rm t}f(\kappa_{\rm t}) \right\}\nonumber\\[2pt]
	&+\psi_{1}^2T_{\rm s}^2\left\{\kappa_{\rm b}^2Q(-\kappa_{\rm b})+\kappa_{\rm t}^2Q(\kappa_{\rm t})-\left[Q(\kappa_{\rm b})-Q(\kappa_{\rm t})\right]^2    \right\}\nonumber\\[2pt]
	&-\psi_{1}^2T_{\rm s}^2\left[f(\kappa_{\rm b})-f(\kappa_{\rm t})+\kappa_{\rm b}Q(-\kappa_{\rm b})+\kappa_{\rm t}Q(\kappa_{\rm t})\right]^2. 
\end{align}  
This result is again in line with the derived clipping noise when linear receiver is employed \cite{Dimitrov12,Chen16}.

By plugging (\ref{BT1}) into (\ref{yn1}), the SPAD output $y[n]$ can be rewritten as
\begin{equation}\label{yn2}
	y[n]=\alpha x[n]+w_{\rm d}[n]+w_{\rm s}[n].
\end{equation}
The analytical expressions of both $\alpha$ and $\sigma_{w_{\rm d}}^2$ have been derived as shown in Proposition $1$ to $3$. After applying FFT operation, the received time-domain signal $y[n]$ is converted back to the frequency domain which can be expressed as
\begin{equation}\label{yn3}
	Y[k]=\alpha X[k]+W_{\rm d}[k]+W_{\rm s}[k],
\end{equation}
where $W_{\rm d}[k]$ and $W_{\rm s}[k]$ denote the FFT of $w_{\rm d}[n]$ and $w_{\rm s}[n]$, respectively. According to CLT, both $W_{\rm d}[k]$ and $W_{\rm s}[k]$ are zero-mean Gaussian distributed noise terms when the number of subcarriers is sufficiently large \cite{Dimitrov13,Tsonev13}. 
%According to Bussgang theorem, the distortion-induced noise $W_{\rm d}[k]$ is not correlated with the signal.   
Therefore, the received signal in the frequency domain is given by the transmitted signal multiplied by a gain factor plus additive Gaussian noise induced by nonlinear distortion and shot noise. 
%In effect, the communication channel defined in (\ref{yn3}) is an additive Gaussian noise channel.
For the distortion-induced noise $w_{\rm d}[n]$, since it is uncorrelated with the signal according to the Bussgang theorem, the frequency-domain noise is also uncorrelated with the signal thus its variance is equal to the time-domain noise variance \cite{Tsonev13}. Therefore, one has $\sigma_{W_{\rm d}}^2=\sigma_{w_{\rm d}}^2$ where $\sigma_{W_{\rm d}}^2$ denotes the variance of $W_{\rm d}[k]$. Different from the distortion-induced noise, the shot noise $w_{\rm s}[n]$ is signal dependent. In the following proposition, we  derive the shot noise variance in frequency domain, denoted as $\sigma_{W_{\rm s}}^2$.  

\begin{Proposition} \label{shotnoise}
The analytical expression of the frequency-domain shot noise variance $\sigma_{W_{\rm s}}^2$ is given by
	\begin{align}\label{sigws}
	\sigma&_{W_{\rm s}}^2=\sigma_{\rm a}^2(\kappa_{\rm b})Q(-\kappa_{\rm b})+\sigma_{\rm a}^2(\kappa_{\rm t})Q(\kappa_{\rm t})+\frac{T_{\rm s}\psi_{1}}{\sqrt{2\pi}}\,e^{-\frac{\psi_{2}\tau_{\rm d}}{N_{\rm a}}}\left[e^{-\frac{1}{2}\kappa_{\rm b}^2-\frac{\psi_{1}\tau_{\rm d}\kappa_{\rm b}}{N_{\rm a}}  }-e^{-\frac{1}{2}\kappa_{\rm t}^2-\frac{\psi_{1}\tau_{\rm d}\kappa_{\rm t}}{N_{\rm a}}  }\right]\nonumber\\[0.5em]
	&+{T_{\rm s}}\!\left(\!\frac{\tau_{\rm d}\psi_1^2}{N_a}\!-\!\psi_2\right)\!e^{-\frac{\psi_2\tau_{\rm d}}{N_{\rm a}}+\frac{\psi_1^2\tau_{\rm d}^2}{2N_{\rm a}^2}}\Bigg[Q\bigg(\kappa_{\rm t}\!+\!\frac{\psi_{1}\tau_{\rm d}}{N_{\rm a}}\bigg)\!-\!Q\bigg(\kappa_{\rm b}\!+\!\frac{\psi_{1}\tau_{\rm d}}{N_{\rm a}}\bigg)\Bigg]\!-\!\frac{\left(2T_{\rm s}\!-\!{\tau_{\rm d}}\right)\tau_{\rm d}}{\sqrt{2\pi}N_{\rm a}}e^{-\frac{2\psi_2\tau_{\rm d}}{N_{\rm a}}+\frac{2\psi_1^2\tau_{\rm d}^2}{N_{\rm a}^2}}\nonumber\\[0.5em]
	&\times 
	\Bigg\{\left[\psi_1^2\kappa_{\rm b}-\frac{2\psi_1^3\tau_{\rm d}}{N_{\rm a}}+2\psi_1\psi_2\right]e^{-\frac{1}{2}\left(\kappa_{\rm b}+\frac{2\psi_1\tau_{\rm d}}{N_{\rm a}}\right)^2} 
	-\left[\psi_1^2\kappa_{\rm t}-\frac{2\psi_1^3\tau_{\rm d}}{N_{\rm a}}+2\psi_1\psi_2\right]e^{-\frac{1}{2}\left(\kappa_{\rm t}+\frac{2\psi_1\tau_{\rm d}}{N_{\rm a}}\right)^2} \nonumber\\[0.5em]
	&+ \bigg[\sqrt{{2\pi}}\psi_1^2+\sqrt{2\pi}\left(\frac{2\psi^2_1\tau_{\rm d}}{N_{\rm a}}-{\psi_2}\right)^2\bigg]\bigg[Q\bigg(\kappa_{\rm b}+\frac{2\psi_{1}\tau_{\rm d}}{N_{\rm a}}\bigg)-Q\bigg(\kappa_{\rm t}+\frac{2\psi_{1}\tau_{\rm d}}{N_{\rm a}}\bigg)\bigg]\Bigg\}.
\end{align}

\end{Proposition}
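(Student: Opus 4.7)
The plan is to reduce the computation of $\sigma_{W_{\rm s}}^2$ to the moments $\mathbb{E}\{\mu_{\rm a}(x)\}$ and $\mathbb{E}\{\mu_{\rm a}^2(x)\}$ already obtained in Propositions 2 and 3, so that no new integrals need to be evaluated.

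First I would transfer the shot noise through the FFT. Conditional on the transmitted OFDM block, the samples $w_{\rm s}[n]$ are zero-mean and independent across $n$ (photon counts in disjoint sample intervals are independent), so the off-diagonal cross-terms in $W_{\rm s}[k]=K^{-1/2}\sum_n w_{\rm s}[n]e^{-2\pi jnk/K}$ vanish on averaging. Taking expectation also over the signal gives
\begin{equation*}
\sigma_{W_{\rm s}}^2 \;=\; \mathbb{E}\{\sigma_{\rm a}^2(x)\},
\end{equation*}
i.e.\ the DFT simply averages the signal-dependent time-domain shot-noise variance over the marginal distribution of $x[n]$.

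Next I would exploit the key algebraic observation that (\ref{varPQ}) can be rewritten cleanly in terms of $\mu_{\rm a}(x)$. Using $\mu_{\rm a}(x)=\lambda_{\rm a}T_{\rm s}\exp(-\lambda_{\rm a}\tau_{\rm d}/N_{\rm a})$ one checks directly that
\begin{equation*}
\sigma_{\rm a}^2(x) \;=\; \mu_{\rm a}(x) \;-\; \frac{(2T_{\rm s}-\tau_{\rm d})\tau_{\rm d}}{N_{\rm a}T_{\rm s}^2}\,\mu_{\rm a}^2(x),
\end{equation*}
so by linearity
\begin{equation*}
\sigma_{W_{\rm s}}^2 \;=\; \mathbb{E}\{\mu_{\rm a}(x)\} \;-\; \frac{(2T_{\rm s}-\tau_{\rm d})\tau_{\rm d}}{N_{\rm a}T_{\rm s}^2}\,\mathbb{E}\{\mu_{\rm a}^2(x)\},
\end{equation*}
and Propositions 2 and 3 immediately supply both terms in closed form.

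The remaining step is purely bookkeeping: I would regroup the resulting sum to match (\ref{sigws}). The contributions from the two clipped regions collapse neatly, since the coefficients of $Q(-\kappa_{\rm b})$ and $Q(\kappa_{\rm t})$ combine into $\mu_{\rm a}(\kappa_{\rm b})-\tfrac{(2T_{\rm s}-\tau_{\rm d})\tau_{\rm d}}{N_{\rm a}T_{\rm s}^2}\mu_{\rm a}^2(\kappa_{\rm b})=\sigma_{\rm a}^2(\kappa_{\rm b})$ and analogously $\sigma_{\rm a}^2(\kappa_{\rm t})$, yielding the first two terms of (\ref{sigws}). The integral $\mathcal{M}$ from (\ref{mean_fin}) carries through verbatim as the second and third terms of the statement, while the integral $\mathcal{U}$ from (\ref{sec_mom}), multiplied by $-\tau_{\rm d}(2T_{\rm s}-\tau_{\rm d})/(N_{\rm a}T_{\rm s}^2)$, absorbs the $T_{\rm s}^2/\sqrt{2\pi}$ prefactor to produce exactly the $-(2T_{\rm s}-\tau_{\rm d})\tau_{\rm d}/(\sqrt{2\pi}N_{\rm a})$ block on the last two lines; the exponentials $e^{-(1/2)(\kappa+2\psi_1\tau_{\rm d}/N_{\rm a})^2}$ combine with the outer $e^{-2\psi_2\tau_{\rm d}/N_{\rm a}+2\psi_1^2\tau_{\rm d}^2/N_{\rm a}^2}$ factor to reproduce precisely the form of the boundary exponentials in (\ref{sec_mom}). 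The main obstacle is therefore not conceptual but the careful verification of these cancellations, in particular confirming that the clipping contributions reconstitute $\sigma_{\rm a}^2(\kappa_{\rm b,t})$ exactly and that the combined exponential prefactors align as claimed.
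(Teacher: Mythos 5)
Your proposal is correct and follows essentially the same route as the paper: reduce $\sigma_{W_{\rm s}}^2$ to $\mathbb{E}\{\sigma_{\rm a}^2(x)\}$ via the DFT averaging argument, then evaluate that expectation by reusing the integrals already computed for Propositions 2 and 3. Your pointwise identity $\sigma_{\rm a}^2(x)=\mu_{\rm a}(x)-\tfrac{(2T_{\rm s}-\tau_{\rm d})\tau_{\rm d}}{N_{\rm a}T_{\rm s}^2}\,\mu_{\rm a}^2(x)$ is exactly what the paper exploits piecewise (its $\mathcal{S}_1$ equals $\mathcal{M}$ and $\mathcal{S}_2\propto\mathcal{U}$), merely stated globally, which makes the reduction to the earlier propositions immediate.
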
 

\begin{proof}
The variance $\sigma_{W_{\rm s}}^2$ can be expressed as
\begin{align}\label{Ws1}
    \sigma_{W_{\rm s}}^2&=\mathbb{E}\{|W_{\rm s}[k]|^2\}\\[2pt]
    &=\frac{1}{K}\sum_{n=0}^{K-1}\sum_{m=0}^{K-1}\mathbb{E}\{ w_s[n]w_s^*[m]\}e^{-\frac{2\pi nkj}{K}+\frac{2\pi mkj}{K}}.\nonumber\\[2pt]
    &=\frac{1}{K}\sum_{n=0}^{K-1}\mathbb{E}\{ |w_s[n]|^2\}-\mathbb{E}^2\{w_s[n]\}.
\end{align}
Considering that $w_s[n]$ is with zero mean, (\ref{Ws1}) can be simplified as
\begin{equation}\label{sigmaWs2}
    \sigma_{W_{\rm s}}^2 =\frac{1}{K} \sum_{n=0}^{K-1} \sigma_{\rm a}^2(x[n]) \simeq \mathbb{E}\{\sigma_{\rm a}^2(x[n])\},
\end{equation}
where the approximation is accurate for relatively large FFT size $K$. Equation (\ref{sigmaWs2}) suggests that the variance of the shot noise in the frequency domain is equal to the average of the signal dependent shot noise variance in the time domain, and hence is signal independent.  The average of the shot noise in time domain can be also written as 
\begin{equation}\label{sigmaws}
	\mathbb{E}\{\sigma_{\rm a}^2(x)\}=\int_{-\infty}^{+\infty}f(x) \sigma_{\rm a}^2(x)\, \mathrm{d}x,
\end{equation}
where again the sample index $n$ is dropped for simplicity and 
%Since $w_{\rm s}[n]$ is with zero mean, its variance can be expressed as
%\begin{equation}
%	\sigma_{w_{\rm s}}^2=\int_{-\infty}^{+\infty}w_{\rm s}^2 \phi(w_{\rm s})\, \mathrm{d}w_{\rm s},
%\end{equation}
%where $\phi(w_{\rm s})$ denotes the PDF of $w_{\rm s}[n]$. This variance can be further calculated as  
%\begin{align}\label{sigmaws}
%	\sigma_{w_{\rm s}}^2=&\int_{-\infty}^{+\infty}f(x)\int_{-\infty}^{+\infty}w_{\rm s}^2\phi(w_{\rm s}|x) \, \mathrm{d}w_{\rm s} \, \mathrm{d}x\nonumber \\[2pt]
%	=&\int_{-\infty}^{+\infty}f(x) \sigma_{\rm a}^2(x)\, \mathrm{d}x,
%\end{align}
%where $\phi(w_{\rm s}|x)$ is conditional PDF of $w_{\rm s}$, which is Gaussian distribution with zero mean and signal dependent variance $\sigma_{\rm a}^2(x)$, and 
$f(x)$ is defined in (\ref{fx}). Note that the expression of $\sigma_{\rm a}^2(x)$ can be achieved by substituting (\ref{xclipped}), (\ref{xt}) and (\ref{lama2}) into (\ref{varPQ}).   

Equation (\ref{sigmaws}) can be solved as 
\begin{equation}\label{sigws2}
	\sigma_{W_{\rm s}}^2=\sigma_{\rm a}^2(\kappa_{\rm b})Q(-\kappa_{\rm b})+\sigma_{\rm a}^2(\kappa_{\rm t})Q(\kappa_{\rm t})+\!\!\underbrace{\int_{\kappa_{\rm b}}^{\kappa_{\rm t}}f(x) \sigma_{\rm a}^2(x)\, \mathrm{d}x}_{\mathcal{S}}.
\end{equation}
Now let's derive the analytical expression of the term $\mathcal{S}$. According to (\ref{varPQ}), the conditional variance $\sigma_{\rm a}^2(x)$ when $x\in[\kappa_{\rm b},\kappa_{\rm t}]$ is given by
\begin{equation}\label{siga}
	\sigma_{\rm a}^2(x)=\left(\psi_{1}x+\psi_{2}\right)T_{\rm s}\,e^{-\frac{\left(\psi_1x+\psi_2\right)\tau_{\rm d}}{N_{\rm a}}}-\frac{\left(\psi_{1}x+\psi_{2}\right)^2T_{\rm s}\tau_{\rm d}}{N_{\rm a}}\left(2-\frac{\tau_{\rm d}}{T_{\rm s}}\right)e^{-\frac{2\left(\psi_1x+\psi_2\right)\tau_{\rm d}}{N_{\rm a}}}.
\end{equation}
Substituting (\ref{siga}) into $\mathcal{S}$ results in
\begin{align}\label{S}
	\mathcal{S}=&\underbrace{\int_{\kappa_{\rm b}}^{\kappa_{\rm t}}\left(\psi_{1}x+\psi_{2}\right)T_{\rm s}e^{-\frac{\left(\psi_1x+\psi_2\right)\tau_{\rm d}}{N_{\rm a}}}f(x)\, \mathrm{d}x}_{\	\mathcal{S}_1}\\[2pt]
	&\;\;\;-\underbrace{\int_{\kappa_{\rm b}}^{\kappa_{\rm t}}\frac{\left(\psi_{1}x+\psi_{2}\right)^2T_{\rm s}\tau_{\rm d}}{N_{\rm a}}\left(2-\frac{\tau_{\rm d}}{T_{\rm s}}\right) e^{-\frac{2\left(\psi_1x+\psi_2\right)\tau_{\rm d}}{N_{\rm a}}}f(x)\, \mathrm{d}x}_{\mathcal{S}_2}.\nonumber
\end{align}
The first integral in (\ref{S}) can be solved analytically as 
\begin{equation}\label{S1}
	\mathcal{S}_1=\frac{\psi_1 T_{\rm s}}{\sqrt{2\pi}}\,\mathcal{W}_2+\psi_{2}\,T_{\rm s}\,e^{-\frac{\psi_{2}\tau_{\rm d}}{N_{\rm a}}+\frac{\psi_1^2\tau_{\rm d}^2}{2N_{\rm a}^2}}\left[Q\left(\kappa_{\rm b}+\frac{\psi_{1}\tau_{\rm d}}{N_{\rm a}}\right)-Q\left(\kappa_{\rm t}+\frac{\psi_{1}\tau_{\rm d}}{N_{\rm a}}\right)\right].
\end{equation}
The second integral can be calculated as 
\begin{equation}\label{S2}
	\mathcal{S}_2=\left(2-\frac{\tau_{\rm d}}{T_s}\right)\frac{\tau_{\rm d}}{N_{\rm a}T_{\rm s}}\mathcal{U},
\end{equation} 
where the term $\mathcal{U}$ is given by (\ref{U}). Therefore, by substituting (\ref{S1}), (\ref{S2}) and (\ref{S}) into (\ref{sigws2}), the expression of $\sigma_{W_{\rm s}}^2$ can be achieved which is summarized as (\ref{sigws}).  
\end{proof}

Again, for the special case of ideal photon counting receiver, the variance of the shot noise in the frequency domain can be simplified as
\begin{equation}
	\sigma_{W_{\rm s, id}}^2=\left(\psi_{1}\kappa_{\rm b}+\psi_{2}\right)T_{\rm s}-\psi_{1}\kappa_{\rm b}Q(\kappa_{\rm b})T_{\rm s}+\psi_{1}\kappa_{\rm t}Q(\kappa_{\rm t})T_{\rm s}+\psi_{1}T_{\rm s}f(\kappa_{\rm b})-\psi_{1}T_{\rm s}f(\kappa_{\rm t}). 
\end{equation} 

Proposition \ref{shotnoise} implies that for OFDM transmission, due to the averaging effect of the FFT operation, the time-domain signal-dependent shot noise turns to be signal-independent noise in the frequency domain. Therefore, both noise terms in the received signal shown in (\ref{yn3}) are uncorrelated with the signal making it a standard additive Gaussian noise channel model. Note that unlike the original OWC channel model, the equivalent additive Gaussian channel model for OFDM SPAD-based OWC system in (\ref{yn3}) is not constrained by the non-negativity condition of the intensity modulation. Therefore, considering symbol-by-symbol detection, the achievable SE of the system can be upper bounded by
\begin{equation}\label{SEupper}
    \varsigma_{\rm upper}=\log_2(1+\gamma).
\end{equation}
where the SNR of the received signal can be written as
\begin{equation}\label{gamma}
	\gamma=\frac{\alpha^2\frac{K}{K-2}}{\sigma_{W_{\rm d}}^2+\sigma_{W_{\rm s}}^2}=\frac{1}{\frac{1}{\gamma_{\rm d}}+\frac{1}{\gamma_{\rm s}}}.
\end{equation}
The terms 
\begin{equation}
	\gamma_{\rm d}=\frac{\alpha^2\frac{K}{K-2}}{\sigma_{W_{\rm d}}^2}, \quad \mathrm{and} \quad \gamma_{\rm s}=\frac{\alpha^2\frac{K}{K-2}}{\sigma_{W_{\rm s}}^2},
\end{equation}
denote the signal-to-distortion-noise ratio (SDNR) and signal-to-shot-noise ratio (SSNR), respectively. The derivation of the analytical expression of the SNR is now complete. The term $\alpha^2{K}/{(K-2)}$ refers to the received electrical signal power, therefore the absolute value of the gain factor, i.e., $|\alpha|$, can be used to measure the signal power. In addition, the value of SNR is determined by two factors, i.e., SDNR and SSNR. The BER of the system when $M$-QAM is employed can be further written as \cite{Dimitrov12}
\begin{equation}\label{BER}
	P_e=\frac{4\big(\sqrt{M}-1\big)}{\sqrt{M}\mathrm{log}_2(M)}{Q} \Bigg(\sqrt{\frac{3\gamma}{M-1}}\Bigg)+\frac{4\big(\sqrt{M}-2\big)}{\sqrt{M}\mathrm{log}_2(M)}{Q} \Bigg(3\sqrt{\frac{3\gamma}{M-1}}\Bigg).
\end{equation}

\section{Numerical Results}\label{NumRes}
\begin{table}
	\renewcommand{\arraystretch}{1.1}
	\caption{The Parameter Setting  \cite{ON,Patanwala22}}
	\label{table}
	\centering
	\resizebox{0.5\textwidth}{!}
	{\begin{tabular}{|c|c|c|}
			\hline
			Symbol & Definition & Value\\
			\hline\hline
			$\lambda_\mathrm{op}$ & Optical wavelength & $450$ nm\\
			\hline
			$\Upsilon_{\rm PDE}$ & The PDE of SPAD @ $\lambda_\mathrm{op}$ & $0.35$ \\ 
			\hline 
			$N_{\rm a}$ &  Number of SPAD pixels in the array & $8192$ \\
			\hline
			$\tau_{\rm d}$ & Dead time of SPAD &  $10$ ns\\
			\hline
			$P_{\rm B}$ & Background light power & $10$ nW\\
			\hline
			$\vartheta_{\rm DCR}$ & Dark count rate & $0.5$ MHz\\
			\hline
			$\varphi_{\rm AP}$ & Afterpulsing probability &  $0.75\%$\\
			\hline
			$\varphi_{\rm CT}$ & Crosstalk probability & $2.5\%$\\
			\hline 
			$P_{\rm max}$ & Maximal transmitted power & $10$ mW\\
			\hline 
			$K$ & Size of FFT and IFFT & $1024$ \\
			\hline
			$T_{\rm s}$ & Time-Domain sample duration  & $20$ ns\\
			\hline
			$M$  & Modulation order & $[16,\,32]$ \\
			\hline
	\end{tabular}}
\end{table}
In this section, the numerical performance analysis of the SPAD-based OFDM system is presented. Unless otherwise mentioned, the parameters used in the simulation
are given in Table \ref{table}. For simplicity, we assume that symmetric clipping is employed, i.e., $\kappa_{\rm t}=-\kappa_{\rm b}=\kappa$, and $P_{\rm min}=0$, hence the average transmitted optical power is $\overline{P}_{\rm Tx}=P_{\rm bias}=P_{\rm max}/2$. By changing channel path loss $\zeta$,  various average optical power at the receiver $\overline{P}_{\rm Rx}=\zeta\,\overline{P}_{\rm Tx}$ can be achieved. The proposed framework in this work is for the  performance analysis of the general OFDM-based OWC systems. Therefore, in this section, we focus on the performance of the systems under various received power. Note that for the specific OWC applications, e.g., FSO and VLC, disparate channel effects should be further considered to achieve the average or outage performance. 

\begin{figure}[!t]
	\centering\includegraphics[width=0.7\textwidth]{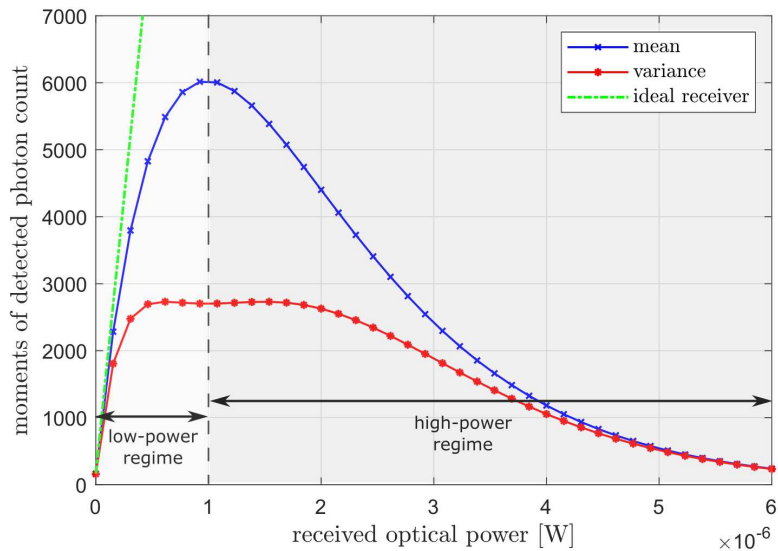}
	\caption{The mean and variance of the detected photon count of the considered SPAD receiver versus the received optical signal power. }\label{moment_SPAD_output}
\end{figure}
Fig. \ref{moment_SPAD_output} presents the mean and variance of the SPAD receiver output versus the received optical power, which are calculated based on (\ref{meanPQ}) and (\ref{varPQ}). With the increase of the received optical power, the average value of the detected photon count firstly increases and then decreases due to the paralysis characteristics of the employed PQ SPAD. The maximal detected photon count of the SPAD receiver is given by $N_{\rm a}T_{\rm s}/(e \tau_{\rm d})$ which is around $6000$ in the considered system. For the considered receiver,  this photon count is achieved when $\overline{P}_{\rm Rx}$ is around $1$ $\mu$W. In the following discussion, we denote this power as the saturation power of the receiver. The regimes with power less and higher than this power are denoted as low- and high-power regime, respectively. In addition, the variance of the photon count is signal dependent and is less than the mean value, which shows the sub-Poisson property of the SPAD receiver \cite{Huang22}. This is different from the ideal photon counting receiver, i.e., when $\tau_{\rm d}=0$, whose mean and variance of the detected photon count are identical and both increase linearly with the received optical power, as also plotted in Fig. \ref{moment_SPAD_output}. Therefore, the dead time induced nonlinear effect can strongly reduce the detected photon count and hence limits the communication performance.            

\begin{figure}[!t]
	\centering\includegraphics[width=0.7\textwidth]{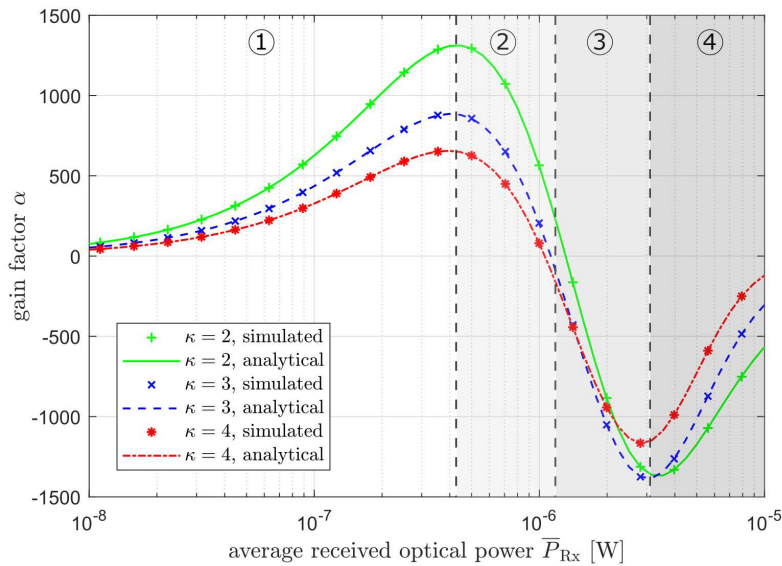}
	\caption{The gain factor $\alpha$ versus the received optical power with various clipping level. }\label{alpha_gain}
\end{figure}
The gain factor $\alpha$ versus the $\overline{P}_{\rm Rx}$ with various $\kappa$ is demonstrated in Fig. \ref{alpha_gain}. It is shown that the analytical results calculated based on (\ref{alpha_fin}) exactly match with the simulation ones, which justifies our analytical derivations. Note that as mentioned above, $\alpha^2K/(K-2)$ refers to the received electrical signal power, hence larger $|\alpha|$ indicates higher electrical signal power. It is presented that initially in Regime $1$, with the rise of $\overline{P}_{\rm Rx}$, $\alpha$ increases but then drops in Regime $2$ because of the severer SPAD nonlinearity. In particular, when $\overline{P}_{\rm Rx}$ is around the saturation power, $\alpha$ approaches zero, since at this point the received optical waveform experiences significant folding effect distorting the modulated signal. This means that this received power would result in the least undistorted electrical signal power and hence the worst performance. Further increase of $\overline{P}_{\rm Rx}$ leads to a negative $\alpha$ in Regime $3$. This is because when $\overline{P}_{\rm Rx}$ is beyond the saturation power, SPAD receiver operates in the high-power regime where any increase of the optical power would cause the reduction of the detected photon count, as presented in Fig. \ref{moment_SPAD_output}. Invoking the definition of $\alpha$ in (\ref{alpha}), such negative correlation between received power and detected photon count introduces a negative $\alpha$. Note that the increase of $|\alpha|$ with the rise of $\overline{P}_{\rm Rx}$ can be observed in Regime $3$ again. This is because when the average signal power goes to this regime, the majority of the waveform dynamic range moves to a linear part of the SPAD receiver response where the detected photon count monotonically decreases with $\overline{P}_{\rm Rx}$.  However, when $\overline{P}_{\rm Rx}$ further increase to the levels of Regime $4$, $\alpha$ starts to tend to zero again. In this scenario, the slope reduction of the average photon count versus the received power (see Fig. \ref{moment_SPAD_output}) shrinks the dynamic range of the detected photon count signal and hence reduces the electrical signal power. In addition, it is also demonstrated that $\alpha$ also varies with the clipping level $\kappa$ and generally a lower $\kappa$ can provide a higher electrical signal power.       

\begin{figure}[!t]
	\centering\includegraphics[width=0.7\textwidth]{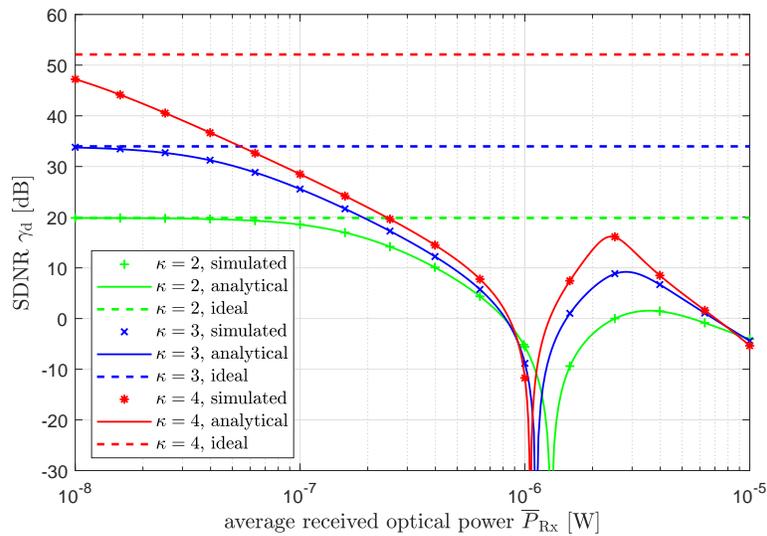}
	\caption{The SDNR versus the received optical power with various clipping level. }\label{SDNR}
\end{figure}
Fig. \ref{SDNR} illustrates the SDNR versus $\overline{P}_{\rm Rx}$ with various $\kappa$. Again the match between analytical and simulation results validates our analytical derivations. When $\overline{P}_{\rm Rx}$ is very small, the nonlinearity of SPAD receiver is negligible, hence its SDNR converges to that of the ideal linear receiver. However, with the increase of $\overline{P}_{\rm Rx}$, the SDNR of SPAD receiver deviates from that of the ideal receiver which remains fixed over $\overline{P}_{\rm Rx}$. This is because for the ideal receiver, the distortion noise is only introduced by the signal clipping at the transmitter and hence the SDNR is irrelevant to the received optical power \cite{Dimitrov12,Chen16}. However, a SPAD receiver suffers from the signal distortion from both clipping and receiver nonlinearity. The change of the $\overline{P}_{\rm Rx}$ causes not only the change of the average electrical signal power, but also the change of the utilized dynamic range of the receiver which introduces a change in distortion-induced noise variance $\sigma_{W_{\rm d}}^2$. Initially, the rise of $\overline{P}_{\rm Rx}$, although it results in higher $\alpha$, it introduces much larger $\sigma_{W_{\rm d}}^2$ and causes the overall reduction of SDNR. When $\overline{P}_{\rm Rx}$ approaches the saturation power, a significant SDNR dip can be observed mainly because of the very small electrical power at this point as mentioned before. The further rise of  $\overline{P}_{\rm Rx}$ in turn increases the SDNR, but eventually the SDNR again diminishes because of the reduction of the electrical signal power in the high power regime. Similar to the ideal receiver, the SPAD receiver also benefits from higher $\kappa$, which can generally provide higher SDNR. The reason is twofold. Firstly, higher $\kappa$ means less distortion noised introduced by the signal clipping at transmitter \cite{Dimitrov12,Chen16}. Secondly, higher  $\kappa$ also indicates smaller scaling factor $\delta$ and hence smaller AC optical signal (see (\ref{xt})). As a result, the received signal occupies smaller dynamic range which leads to less distortion caused by receiver nonlinearity.  

\begin{figure}[!t]
	\centering\includegraphics[width=0.7\textwidth]{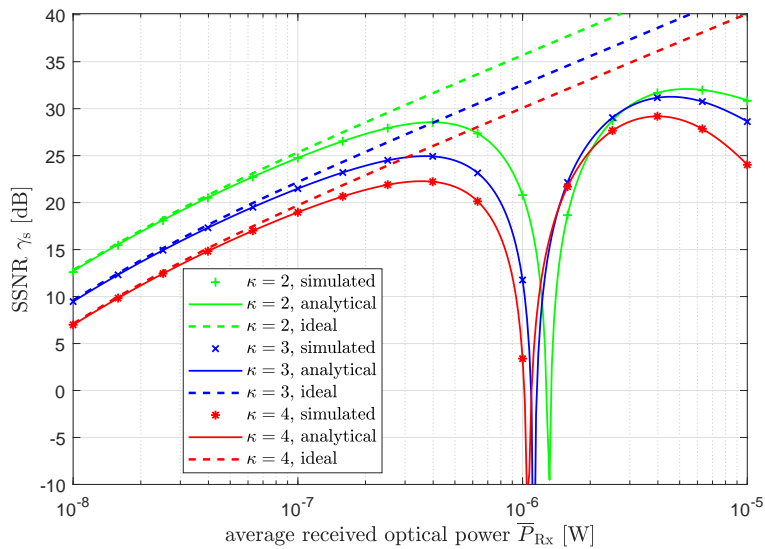}
	\caption{The SSNR versus the received optical power with various clipping level. }\label{SSNR}
\end{figure}
The  SSNR versus $\overline{P}_{\rm Rx}$ with various $\kappa$ is demonstrated in Fig. \ref{SSNR}. Different from the ideal linear receiver whose SSNR monotonically increases with $\overline{P}_{\rm Rx}$, the relationship between SSNR and $\overline{P}_{\rm Rx}$ is more complicated for SPAD receivers. Similar to SDNR, SSNR firstly increases with $\overline{P}_{\rm Rx}$, but a significant SSNR drop can be observed when $\overline{P}_{\rm Rx}$ is close to the saturation power of the SPAD receiver. This is again because of the small electrical signal power at this point. However, different from SDNR, smaller $\kappa$ is generally more preferable which can provide higher SSNR, due to its higher induced electrical signal power as demonstrated in Fig. \ref{alpha_gain}. This implies that there should exist an optimal $\kappa$ that maximizes the total SNR $\gamma$. It is also worth noting that the analytical results of SSNR perfectly match with the simulation results, which validates the Proposition \ref{shotnoise}. 

\begin{figure}[!t]
	\centering\includegraphics[width=0.7\textwidth]{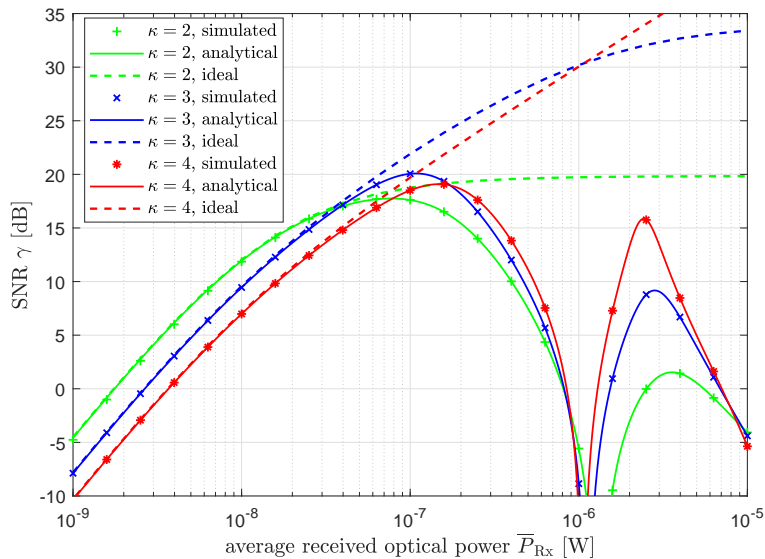}
	\caption{The SNR versus the received optical power with various clipping level. }\label{SNR_figure}
\end{figure}
Fig. \ref{SNR_figure} presents the SNR versus $\overline{P}_{\rm Rx}$ with various $\kappa$. Note that SNR is related to both SDNR and SSNR as given in (\ref{gamma}). It is demonstrated that, different from the ideal receiver whose SNR monotonically increases with $\overline{P}_{\rm Rx}$,  the change of SNR over $\overline{P}_{\rm Rx}$ is similar to SDNR and SSNR discussed above and therefore larger $\overline{P}_{\rm Rx}$ does not always mean higher SNR. It is also illustrated that for different $\overline{P}_{\rm Rx}$, the optimal clipping level is also different. Generally, smaller $\kappa$ provides higher SNR when $\overline{P}_{\rm Rx}$ is low; whereas, larger $\kappa$ in turn gives higher SNR when $\overline{P}_{\rm Rx}$ is high. This attributes to the fact that under different $\overline{P}_{\rm Rx}$, the effect of  either the distortion induced noise or shot noise would be dominant. For $\overline{P}_{\rm Rx}$ where the former is the dominant factor, the SNR is limited by SDNR. In this case, higher $\kappa$ can lead to higher SDNR and hence higher SNR. However, if the shot noise is the dominant factor, SNR is in turn limited by SSNR and less $\kappa$ gives higher SNR. Therefore, for any given $\overline{P}_{\rm Rx}$, the clipping level can be optimized to achieve the optimal performance. In addition, Fig. \ref{SNR_figure} also illustrates that the SNR of SPAD receiver is significantly less than that of the ideal receiver especially when received optical power is high, which shows the significant impact of the SPAD nonlinearity.     

\begin{figure}[!t]
	\centering\includegraphics[width=0.72\textwidth]{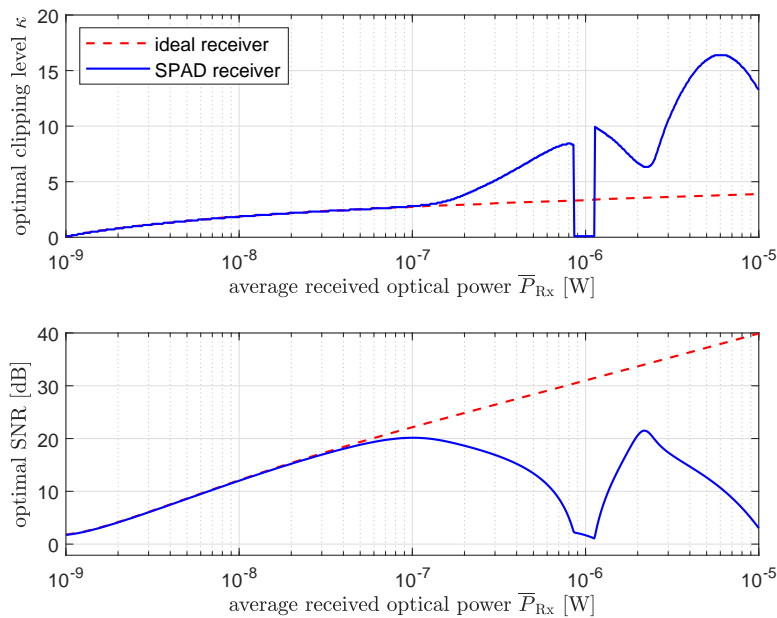}
	\caption{The optimal clipping level and corresponding SNR  versus the received optical power. }\label{optimal_clipping}
\end{figure}
To approach the communication performance limits of the SPAD receiver, in this work, the optimization of signal clipping level $\kappa$ is also investigated through exhaustive search. Fig.  \ref{optimal_clipping} presents the recorded optimal clipping level and the corresponding optimal SNR versus $\overline{P}_{\rm Rx}$. Both SPAD receiver and ideal photon counting receiver are considered. For the ideal receiver, with the increase of $\overline{P}_{\rm Rx}$, the nonlinear distortion induced noise becomes the dominant noise factor. Since the nonlinear distortion is solely introduced by signal clipping which can be sufficiently eliminated when $\kappa$ is above $4$, the optimal $\kappa$ monotonically increases with $\overline{P}_{\rm Rx}$ and finally saturates at around $4$. On the other hand, for the SPAD receiver, the behaviour of optimal $\kappa$ as a function of $\overline{P}_{\rm Rx}$ is more complicated. This is because of the change of the relative amounts of SDNR and SSNR over $\overline{P}_{\rm Rx}$ in the presence of the receiver nonlinearity. 
It is worth noting that when $\overline{P}_{\rm Rx}$ is relatively high, the optimal $\kappa$ of SPAD receiver can be significantly beyond $4$. In fact, when clipping level is above this value, the clipping noise already vanishes and the increase of $\kappa$ is simply for the reduction of the optical signal dynamic range, which can effectively mitigate the nonlinear effect introduced by the receiver. This result indicates that when the receiver nonlinearity is severe, using smaller dynamic range of the transmitter is actually beneficial. 

\begin{figure}[!t]
	\centering\includegraphics[width=0.75\textwidth]{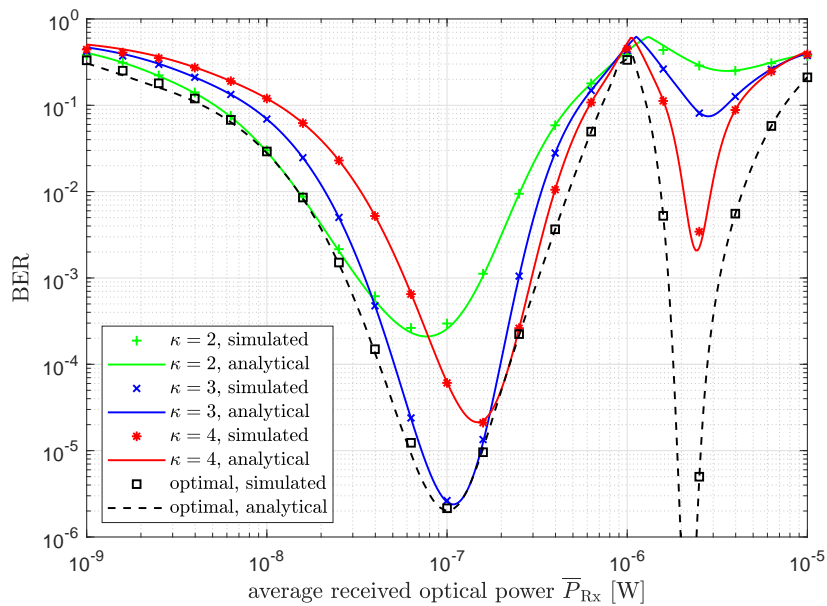}
	\caption{The BER versus the received optical power when $16$-QAM is employed. }\label{BER_figure_M16}
\end{figure}
\begin{figure}[!t]
	\centering\includegraphics[width=0.78\textwidth]{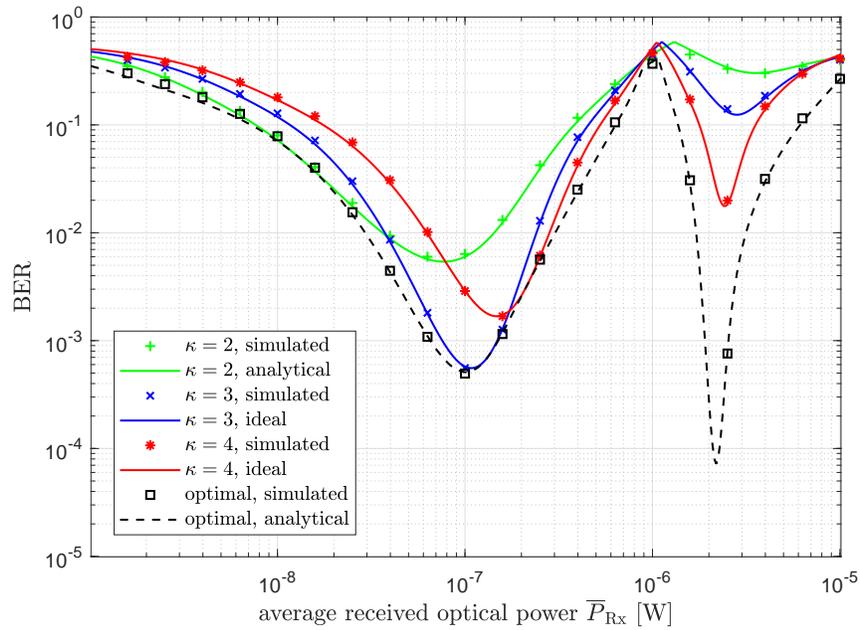}
	\caption{The BER versus the received optical power when $32$-QAM is employed. }\label{BER_figure_M32}
\end{figure}

Let's further consider the BER performance of SPAD-based OWC system with OFDM. The BER versus $\overline{P}_{\rm Rx}$ when $16$-QAM and $32$-QAM are employed are demonstrated in Fig. \ref{BER_figure_M16} and  Fig. \ref{BER_figure_M32}, respectively. Both simulated and analytical results are presented which again match with each other. Due to the two peaks of SNR with the change of $\overline{P}_{\rm Rx}$ shown in Fig. \ref{SNR_figure},  with the increase of $\overline{P}_{\rm Rx}$, two dips of BER can be observed. In addition, different $\kappa$ also results in distinct performances. In these figures, the BER performance when the optimal clipping level is employed is also plotted. It can be observed that by choosing the optimal clipping level, the best BER performance over the whole range of $\overline{P}_{\rm Rx}$ can be achieved. By increasing the modulation order from $16$ to $32$, although the SE increases from $4$ bits/symbol to $5$ bits/symbol,
the BER performance is also degraded. This suggests that in practical implementations, appropriate modulation order should be employed considering the signal SNR and the BER target. 

\begin{figure}[!t]
	\centering\includegraphics[width=0.8\textwidth]{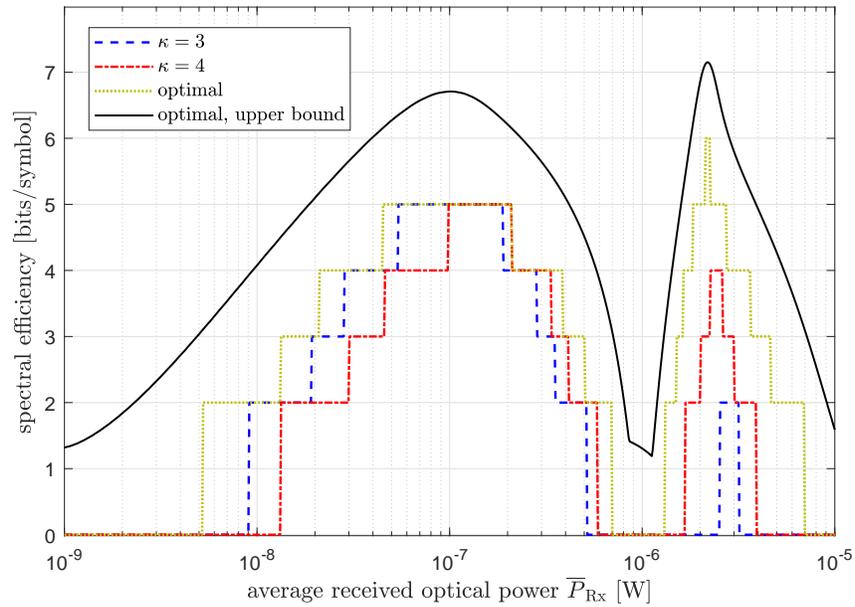}
	\caption{The achievable spectral efficiency versus the received optical power. }\label{data_rate_PR_ana}
\end{figure}
To explore the throughput of the SPAD-based OFDM system, Fig. \ref{data_rate_PR_ana} is plotted which presents the achievable SE versus $\overline{P}_{\rm Rx}$. Note that to achieve this result, a vector of QAM modulation order $[4, 8, 16, 32, 64]$ is considered. For each $\overline{P}_{\rm Rx}$, the BERs when different modulation schemes are employed are calculated according to (\ref{BER}) and the maximal SE with the BER less than a BER target of $3\times10^{-3}$ is recorded. The BER target is selected which is below the $7\%$ forward error correction (FEC) limits \cite{Wang:13}. It is demonstrated in Fig. \ref{data_rate_PR_ana} that the achievable SE varies with $\overline{P}_{\rm Rx}$ and by employing optimal clipping level the highest SE can be achieved over the considered $\overline{P}_{\rm Rx}$, as expected. Therefore, for OWC links with highly dynamic channels, the adaptive signal clipping and modulation scheme can be employed to achieve the best throughput over a wide range of the received power. In addition, the upper bound on achievable SE of the considered system with optimal signal clipping, calculated based on (\ref{SEupper}), is also plotted in Fig. \ref{data_rate_PR_ana}. This upper bound is higher than the SE of the QAM and is the fundamental limit of the investigated system.

\section{Conclusion}\label{con}
SPAD receivers have a great potential for improving the sensitivity of OWC systems and hence combating the outage caused by power fluctuations.  OFDM can be used in SPAD-based OWC systems to improve the SE and boost the data rate. In this work, a theoretical performance analysis of SPAD-based OFDM system is presented. The analytical expressions of both SNR and BER are derived which perfectly match Monte Carlo simulation results verifying the accuracy of the analytical derivations. The presented analytical results provide an effective and accurate way to estimate system performance of practical SPAD-based systems employing OFDM. Through extensive numerical results, the impact of the SPAD nonlinearity on its performance is investigated showing new insights into the best operation regime of PQ SPADs. In addition, the optimization of the signal clipping is further investigated to improve the communication performance. 

\section*{Acknowledgement}
This work is supported by Engineering and Physical Sciences Research Council (EPSRC) under ARROW (EP/R023123/1) and TOWS (EP/S016570/1) grants.

\bibliographystyle{IEEEtran}
\bibliography{IEEEabrv,SPAD_OFDM_Shenjie}
\end{document}